\newtheorem{theorem}{Theorem}
\theoremstyle{definition}
\newtheorem{proposition}{Proposition}
\newtheorem{definition}{Definition}
\newcommand{\project}{WaveCert\xspace}
\newcommand{\allconfig}{\ensuremath{\mathcal{C}}}
\newcommand{\initialconfig}{\ensuremath{\mathcal{I}}}
\newcommand{\channel}{\ensuremath{\mathrm{Channel}}}
\newcommand{\config}{c}
\newcommand{\cutpoint}{P}
\newcommand{\df}{\ensuremath{\mathrm{df}}}
\newcommand{\llvm}{\ensuremath{\mathrm{llvm}}}
\newcommand{\seq}{\ensuremath{\mathrm{seq}}}
\newcommand{\mem}{\ensuremath{\mathsf{Mem}}}
\newcommand{\hint}{\ensuremath{\mathit{hint}}}
\newcommand{\canonsched}{\mathsf{Canon}}
\newcommand{\seqlang}{Seq\xspace}
\newcommand{\tracesdf}{\mathsf{Traces}_\df}
\newcommand{\tracesseq}{\mathsf{Traces}_\seq}
\newcommand{\varlist}[1]{\mathbf{#1}\xspace}
\newcommand{\term}{\mathsf{Terms}}
\newcommand{\readperm}{\ensuremath{\texttt{read}}}
\newcommand{\writeperm}{\ensuremath{\texttt{write}}}
\newcommand{\permleq}{\ensuremath{\leq}}
\newcommand{\permle}{\ensuremath{<}}
\newcommand{\permdisjoint}{\ensuremath{\mathit{disjoint}}}
\newcommand{\numevalprograms}{21\xspace}
\newcommand{\numbugs}{8\xspace}
\algnewcommand{\LineComment}[1]{\State \(\triangleright\) #1}
\lstdefinelanguage{llvm}{
  morecomment = [l]{;},
  morestring=[b]", 
  sensitive = true,
  classoffset=0,
  morekeywords={
    define, declare, global, constant,
    internal, external, private,
    linkonce, linkonce_odr, weak, weak_odr, appending,
    common, extern_weak,
    thread_local, dllimport, dllexport,
    hidden, protected, default,
    except, deplibs,
    volatile, fastcc, coldcc, cc, ccc,
    x86_stdcallcc, x86_fastcallcc,
    ptx_kernel, ptx_device,
    signext, zeroext, inreg, sret, nounwind, noreturn,
    nocapture, byval, nest, readnone, readonly, noalias, uwtable,
    inlinehint, noinline, alwaysinline, optsize, ssp, sspreq,
    noredzone, noimplicitfloat, naked, alignstack,
    module, asm, align, tail, to,
    addrspace, section, alias, sideeffect, c, gc,
    target, datalayout, triple,
    blockaddress
  },
  morekeywords={
    add, fadd, sub, fsub, mul, fmul,
    sdiv, udiv, fdiv, srem, urem, frem,
    and, or, xor,
    icmp, fcmp,
    eq, ne, ugt, uge, ult, ule, sgt, sge, slt, sle,
    oeq, ogt, oge, olt, ole, one, ord, ueq, ugt, uge,
    ult, ule, une, uno,
    nuw, nsw, exact, inbounds,
    phi, call, select, shl, lshr, ashr, va_arg,
    trunc, zext, sext,
    fptrunc, fpext, fptoui, fptosi, uitofp, sitofp,
    ptrtoint, inttoptr, bitcast,
    ret, br, indirectbr, switch, invoke, unwind, unreachable,
    malloc, alloca, free, load, store, getelementptr,
    extractelement, insertelement, shufflevector,
    extractvalue, insertvalue,
  },
  alsoletter={\%},
  keywordsprefix={\%},
}
\renewcommand\footnotetextcopyrightpermission[1]{}
\let\@authorsaddresses\@empty
\begin{document}

\title{WaveCert: Translation Validation for Asynchronous Dataflow Programs via Dynamic Fractional Permissions}

\author{Zhengyao Lin}
\affiliation{
  \department{Computer Science Department}
  \institution{Carnegie Mellon University}
  \country{USA}
}
\email{zhengyal@cmu.edu}

\author{Joshua Gancher}
\affiliation{
  \department{Computer Science Department}
  \institution{Carnegie Mellon University}
  \country{USA}
}
\email{jgancher@andrew.cmu.edu}

\author{Bryan Parno}
\affiliation{
  \department{Computer Science Department}
  \institution{Carnegie Mellon University}
  \country{USA}
}
\email{parno@cmu.edu}

\begin{abstract}
Coarse-grained reconfigurable arrays (CGRAs) have gained attention in recent years due to their 
promising power efficiency compared to traditional von Neumann architectures.
To program these architectures using ordinary languages such as C, 
a dataflow compiler must transform the original sequential, imperative program
into an equivalent dataflow graph, composed of dataflow operators running
in parallel.
This transformation is challenging since the asynchronous nature of dataflow
graphs allows out-of-order execution of operators, leading to behaviors not present in the original imperative programs.

We address this challenge by developing a translation validation technique for dataflow compilers
to ensure that the dataflow program has the same behavior as the original imperative program 
on all possible inputs and schedules of execution.
We apply this method to a state-of-the-art dataflow compiler targeting the RipTide CGRA architecture.
Our tool uncovers 8 compiler bugs where the compiler outputs incorrect dataflow
graphs, including a data race that is otherwise hard to discover via testing.
After repairing these bugs, our tool verifies the correct compilation of all programs in the RipTide benchmark suite.

\end{abstract}

\maketitle 

\section{Introduction}



Even as Moore's Law ends, many applications still demand better power efficiency, 
e.g., for high-performance machine learning
or for ultra-low-power environmental monitoring.
A flood of new hardware architectures has emerged to meet this demand,
but one attractive approach that retains general-purpose programmability
is a class of dataflow architectures called \emph{coarse-grained reconfigurable arrays} (CGRAs)~\cite{cgra-survey}.
In a CGRA architecture, an array of processing elements (PEs) is physically laid out on the chip.
Each PE can be configured to function as a high-level \emph{operator} that performs an arithmetic operation, a memory operation, or a control-flow operation.
These PEs communicate via an on-chip network, reducing data movement costs compared to traditional von Neumann architectures, thus making a CGRA more power efficient.

Abstractly, CGRAs run \emph{dataflow programs}, or networks of 
operators that execute independently and asynchronously. While CGRAs are able to 
run dataflow programs efficiently due to their inherent parallelism, 
this parallelism also introduces new, subtle correctness issues due to the potential for data races. As a result, dataflow programs are not written directly,
but are instead \emph{compiled} from sequential, imperative 
languages. However, the gap between imperative and dataflow programming
means that these compilers are complex; to maximize performance,
they must use a variety of program analyses to convert imperative programs
into equivalent distributed ones. 

As CGRA architectures continue to emerge and rapidly evolve,
we argue that \emph{formal verification} --- mathematically
sound proofs of system correctness --- 
should be integrated early on; 
past experience teaches us that belatedly retrofitting them may be costly or simply infeasible.
Since CGRAs cannot execute as intended without correct dataflow programs, 
we begin this effort by verifying the translation from imperative programs to dataflow programs.

In this work, we propose using \emph{translation validation}~\cite{pnueli-tv} to
certify the results of dataflow compilers. 
In our case, translation validation amounts to checking that 
the output dataflow program from the compiler has the same behaviors as the
input imperative program.
Compared to traditional testing, translation validation provides a much stronger guarantee, as it checks that the imperative and the dataflow programs behave the same on all possible inputs.
Additionally, in contrast to full compiler verification, 
translation validation allows the use of unverified compilation toolchains,
which is crucial for the fast-developing area of CGRA architectures.

Our main challenge for leveraging translation validation is managing the 
asynchrony of dataflow programs. Most 
translation validation works~\cite{pnueli-tv,necula-tv,kundu-tv,data-driven-eq}
prove equivalence between input and output programs via 
\emph{simulation}, or relational invariants between the states of the two
programs. Since dataflow programs have exponentially more states than
imperative programs (arising from asynchronous scheduling decisions), constructing this 
simulation relation directly would be quite tricky. 

Instead, we utilize a two-phase approach. First, we 
use symbolic execution to prove that there is a
simulation between the input imperative program and the output dataflow program
\emph{on a canonical schedule}. In the canonical schedule, the dataflow
operators are scheduled to fire in a similar order to their counterparts in the 
original imperative program. 

Next, we prove that any possible schedule of the dataflow program must
converge to the same final state as the canonical schedule; i.e., it is 
\emph{confluent}~\cite{term-rewriting}.
For our setting, proving confluence requires showing that the dataflow program
does not contain any data races. 
Inspired by fractional permissions~\cite{fractional-perm}, we augment our symbolic execution with
\emph{linear permission tokens} to track ownership of memory locations.
In contrast with interactive program logics (e.g., concurrent separation
logic~\cite{concurrent-separation-logic}) that typically require manual annotations,
we automatically compute \emph{dynamic} permissions that 
may flow in arbitrary (sound) ways.

Putting together both phases, we prove that the
imperative program is equivalent to the dataflow program on all possible inputs
and schedules.
Furthermore, equivalence with the sequential program also implies liveness and
deadlock freedom for the dataflow program, as it enforces that the dataflow
program can make progress whenever the sequential program can.


We have implemented this technique in \project, a translation validation system
targeting the state-of-the-art CGRA architecture RipTide~\cite{riptide},
which operates via a compiler from LLVM~\cite{llvm} programs to dataflow programs.
Using our tool, we found \numbugs compiler bugs where the RipTide dataflow compiler generated incorrect dataflow programs.
All of these bugs were confirmed by the developers of the RipTide compiler.
One of these bugs allows the compiler to emit a dataflow program with data races, which are hard to discover via testing and costly to fix after deployment.

To summarize, our contributions in this work are:
\begin{itemize}
    \item A novel two-phase translation validation technique to prove that the output dataflow program has equivalent behavior as the input imperative program, capturing both correctness and liveness properties.
    \item An implementation of our technique, \project\footnote{Source code is available in an GitHub repository~\cite{anonymous-repo}}, targeting the RipTide dataflow compiler.
    \item An evaluation of \project on the RipTide dataflow compiler, which reveals \numbugs compilation bugs. Most verification result takes around 10 seconds with a maximum of about 30 seconds.
\end{itemize}



\begin{figure}
    \centering
    \includegraphics[width=0.55\linewidth]{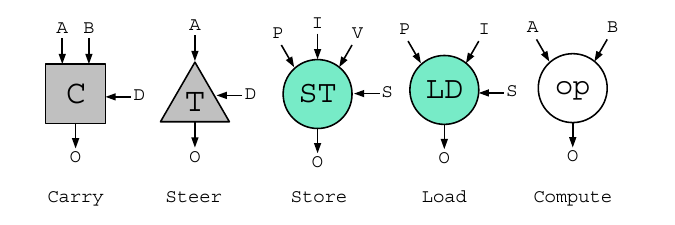}
    \Description{Examples of operators.}
    \caption{Examples of operators. \texttt{Carry} and \texttt{Steer} manage
    control flow, \texttt{Store} and \texttt{Load} manipulate the main memory,
    and \texttt{op} $\in \{+, *, <, \dots\}$ executes pure arithmetic operations.}
    \label{fig:operators}
\end{figure}

\section{Preliminaries: Dataflow Programs, CGRAs, and RipTide}
\label{sec:prelim}

Dataflow programming~\cite{dataflow-programming-survey} is an old idea dating
back to the 1960s and 1970s.
Early foundational works include the dataflow architecture by Dennis et al.~\cite{dennis-dataflow}, as well as theoretical analysis of models of dataflow programs such as computation graphs~\cite{computation-graphs} and Kahn process networks~\cite{kahn-process-networks}.
The main motivation in these early works is to achieve a greater degree of instruction-level parallelism~\cite{ackerman-dataflow}.

In the intervening years, the community has explored a wide variety of dataflow architectures~\cite{dataflow-architectures,yazdanpanah-dataflow}.
Among these dataflow architectures, \emph{coarse-grained reconfigurable arrays} (CGRAs)~\cite{cgra-survey} stand out due to their promising power efficiency: recent work~\cite{snafu,riptide} is able to support general programmability, while reducing the energy cost by orders of magnitude compared to a von Neumann core. 

To utilize the efficiency and parallelism in CGRAs, the predominant general-purpose way to program CGRAs is to compile a sequential, imperative program into a dataflow graph via a dataflow compiler, as seen in CGRA projects such as PipeRench~\cite{piperench}, WaveScalar~\cite{wavescalar}, and more recently, RipTide~\cite{riptide} and Pipestitch~\cite{pipestitch}.

In this work, we focus on the approach taken by RipTide~\cite{riptide}, which has a dataflow compiler from arbitrary LLVM functions to dataflow programs.
We briefly introduce LLVM and dataflow programs below.

\paragraph*{LLVM IR}
LLVM~\cite{llvm} is a compiler framework and features an intermediate representation called LLVM IR.
Compiler writers can translate a higher-level language such as C or Rust into LLVM IR, and then the LLVM framework can perform optimizations on LLVM IR and produce machine code in various target architectures.
For an informal description of the semantics of the LLVM IR, we refer the reader to the official LLVM IR manual~\cite{llvm-ir}.
In this work, we use the subset of LLVM IR supported by the RipTide compiler, including basic control-flow, integer arithmetic, and memory operations.
Function calls and floating-point operations are not supported.

\paragraph*{Dataflow programs}
A \emph{dataflow program} is a Turing-complete representation of programs.
It is represented as a dataflow graph in which the nodes are called \emph{operators} and edges are called \emph{channels}.
Semantically, operators can be thought of as stateful processes running in
parallel, which communicate through channels, or FIFO queues of messages.
Repeatedly, each operator: waits to dequeue (a subset of) input channels to get
input values; performs a local computation,
optionally changing its local state or global memory; and pushes output values 
to (a subset of) output channels. 
When an iteration of this loop is done, we
say that the operator has executed or \emph{fired}.
Since different CGRAs have different strategies for scheduling
operators~\cite{cgra-survey}, we conservatively use asynchrony to model all
possible schedules. 

In \Cref{fig:operators}, we show the five most important types of operators in
RipTide~\cite{riptide}\footnote{\project supports all RipTide operators,
except for the \emph{Stream} operator, as described in \Cref{sec:limitations}.}.
In \Cref{fig:operators}, reading from left to right, we first have the control flow operators, 
\emph{carry} (\texttt{C}) and \emph{steer} (\texttt{T}).
The carry operator is used to support loop variables.
It has two local states: in the initial state, it waits for a value from \texttt{A}, outputs it to
\texttt{O}, and transitions to the loop state. In the loop state, it waits for
values from \texttt{B} and \texttt{D} (decider); if \texttt{D} is true, it
outputs \texttt{O} = \texttt{B}; otherwise it discards \texttt{B} and
transitions to the initial state.

The steer operator is used for conditional execution. It waits for values from
\texttt{A} and \texttt{D}. If \texttt{D} is true, it outputs \texttt{O} =
\texttt{A}; otherwise, it discards the value and outputs nothing.

To utilize the global memory, we have the \emph{load} (\texttt{LD}) and \emph{store}
(\texttt{ST}) operators. The load operator waits for values from \texttt{P}
(base), \texttt{I} (offset), and an optional signal \texttt{S} for memory
ordering; reads the memory location \texttt{P[I]}; and outputs the read
value to \texttt{O}. Similarly, the store operator waits for values from
\texttt{P} (base), \texttt{I} (offset), \texttt{V} (value), and an optional
signal \texttt{S} for memory ordering; stores \texttt{P[I]} = \texttt{V} in the
global memory; and optionally outputs a finish signal to \texttt{O}.

All other \emph{compute} operators (e.g., \texttt{op} $\in \{+, *, <\}$) wait for values from input channels,
perform the computation, and then output the result to output channels.

To compile an LLVM program to a dataflow program, the RipTide compiler first 
maps each LLVM instruction to its corresponding dataflow
operator, and then it enforces the expected control-flow and memory ordering semantics of the original program.
Using a control dependency analysis, it inserts steer operators to
selectively enable/disable operators depending on branch/loop conditions; 
Using a memory ordering analysis,
it inserts dataflow dependencies between load/store operators to prevent data races.

These analyses are quite complex and use various optimizations to allow more parallelism than the original LLVM program.
As shown in our evaluation in \Cref{sec:evaluation}, this process can easily have bugs.
Hence, in our work, we use translation validation to certify the correctness of dataflow compilation.





\section{Overview and an Example}
\label{sec:overview}



\project performs translation validation on the RipTide dataflow compiler, which compiles LLVM programs to dataflow programs.
Given the input program and the compiled output program (along with some hints
generated by the compiler), \project performs two checks: 
first, a \emph{simulation} check, which verifies that the LLVM program is
equivalent to the dataflow program on a restricted, canonical schedule of dataflow
operators; and second, a \emph{confluence} check, which proves that the choice of the canonical schedule for the dataflow program is general and all other schedules reach the same final state as the canonical schedule. 

\newcommand{\examplellvmname}{\texttt{@test}\xspace}
\newcommand{\examplellvmheader}{\texttt{\%header}\xspace}
\newcommand{\examplellvmbody}{\texttt{\%body}\xspace}
\newcommand{\examplellvmloopindex}{\texttt{\%i}\xspace}
\newcommand{\examplellvmparama}{\texttt{\%A}\xspace}
\newcommand{\examplellvmparamlen}{\texttt{\%len}\xspace}

\newcommand{\exampledfphii}{1\xspace}
\newcommand{\exampledfphilso}{2\xspace}
\newcommand{\exampledficmp}{4\xspace}
\newcommand{\exampledfst}{9\xspace}
\newcommand{\exampledfld}{7\xspace}
\newcommand{\exampledfinci}{6\xspace}
\newcommand{\exampledfincai}{8\xspace}
\newcommand{\exampledfsteeri}{3\xspace}
\newcommand{\exampledfsteerlso}{5\xspace}

\begin{figure}[ht!]
    \centering
    \begin{subfigure}[b]{0.5\linewidth}
        \centering
        \begin{lstlisting}[language=llvm,escapechar=|]
define void @test(i32* %A, i32* %B, i32 %len) {
entry:
  br label %header
header:
  %i = phi i32 [ 0, %entry ], [ %i_inc, %body ]|\label{line:example-phi-i}|
  %cond = icmp slt i32 %i, %len|\label{line:example-icmp}|
  br i1 %cond, label %body, label %end
body:
  %idx1 = getelementptr i32, i32* %A, i32 %i|\label{line:example-gep}|
  %A_i = load i32, i32* %idx1|\label{line:example-load}|
  %A_i_inc = add i32 %A_i, 1|\label{line:example-add-a-i}|
  %idx2 = getelementptr i32, i32* %B, i32 %i
  store i32 %A_i_inc, i32* %idx2|\label{line:example-store}|
  %i_inc = add i32 %i, 1|\label{line:example-add-i}|
  br label %header
end:
  ret void
}
        \end{lstlisting}
        \caption{Input LLVM function.}
        \label{fig:example-llvm}
    \end{subfigure}
    \begin{subfigure}[b]{0.4\linewidth}
        \centering
        \includegraphics[width=0.78\linewidth]{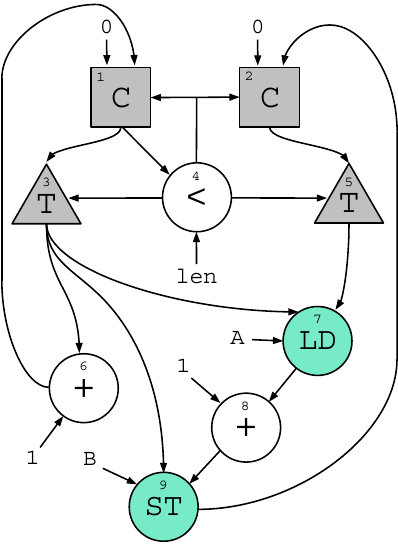}
        \caption{Output dataflow program.}
        \label{fig:example-dataflow}
    \end{subfigure}
    \caption{An example of input/output programs from the RipTide dataflow
    compiler. Most dataflow operators correspond to LLVM instructions: the
    comparison operator corresponds to line~\ref{line:example-icmp}, the add
    operators correspond to lines~\ref{line:example-add-a-i} and
    \ref{line:example-add-i}, while the load and store operators correspond to
    lines~\ref{line:example-load} and \ref{line:example-store}, respectively.}
    \Description{An example.}
    \label{fig:example}
\end{figure}


If a pair of input/output programs passes both checks, the two programs are equivalent (in the sense of \Cref{def:full-equivalence}).
Equivalence implies both \emph{safety} --- the dataflow program returns the
correct values --- and \emph{liveness} --- the dataflow program always
terminates correctly, and does not contain communication-related deadlocks.

\paragraph{Example}
In \Cref{fig:example}, we have an example input and output program from the dataflow compiler.
The input LLVM function \examplellvmname in \Cref{fig:example-llvm} contains a single loop with loop header block \examplellvmheader and loop body block \examplellvmbody.
The loop increments variable \examplellvmloopindex from \texttt{0} to \examplellvmparamlen, and at each iteration, updates \texttt{\%B[\%i] = \%A[\%i] + 1}.
%
%
The LLVM function \examplellvmname is compiled to the dataflow program in
\Cref{fig:example-dataflow}.

In addition to compute and memory operators, the compiler also inserts operators to faithfully implement the 
sequential semantics of the LLVM program.
The steer operators marked with
\texttt{T} enforce that the operators corresponding to the loop body only
execute when the loop condition is true.
The carry operator \exampledfphii is used for the loop induction variable \texttt{\%i} (line~\ref{line:example-phi-i}).
The carry operator \exampledfphilso is used for a loop variable inserted by the compiler to enforce a memory dependency:
since arrays \texttt{A} and \texttt{B}
could overlap, we have an additional dataflow path along operators $\exampledfst \to \exampledfphilso \to \exampledfsteerlso \to \exampledfld$
in order to enforce that the load in the next iteration waits until the previous store has finished.



\begin{figure*}
    \centering
    \includegraphics[width=\textwidth]{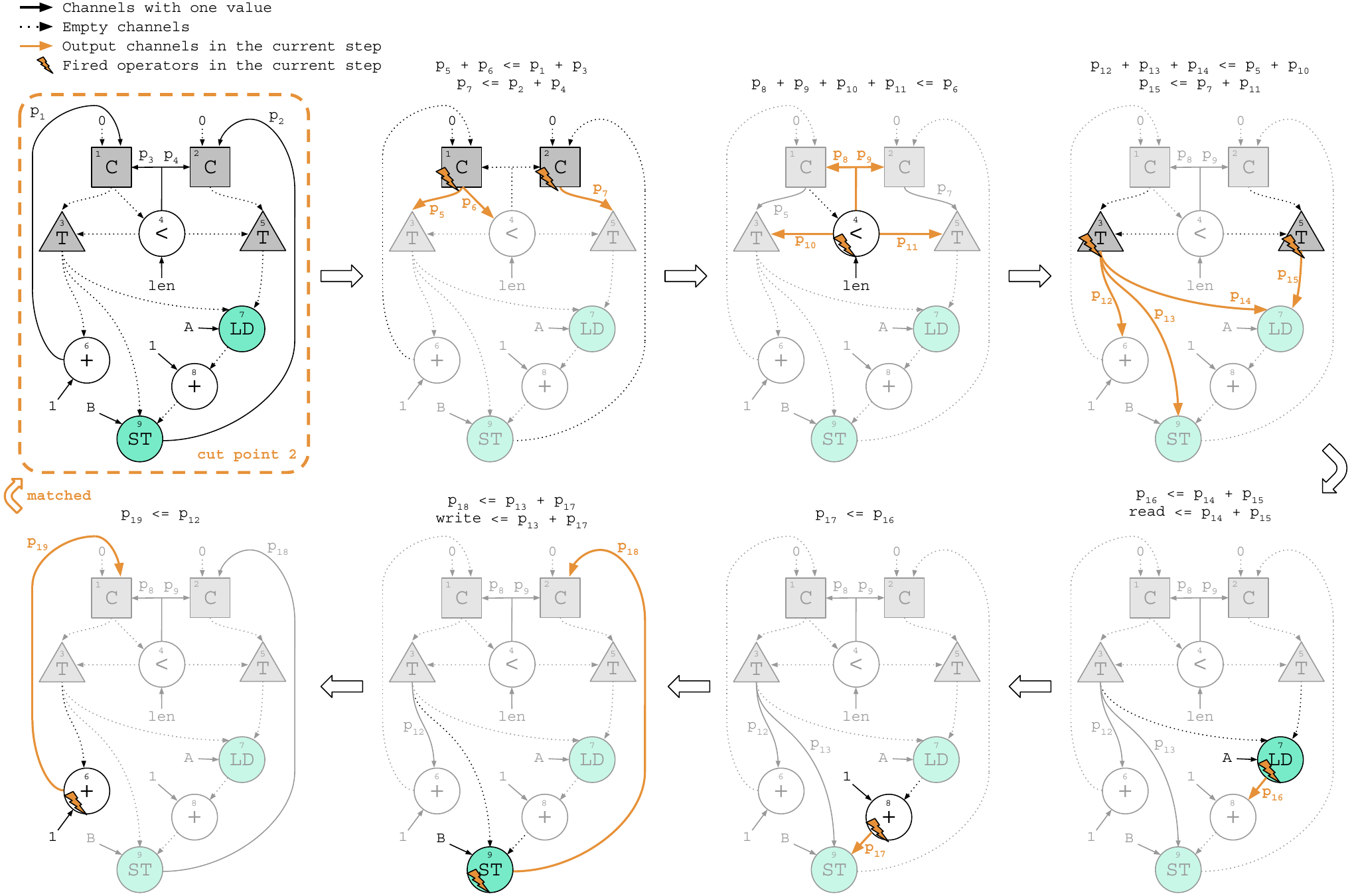}
    \caption{This plot shows a trace of execution from dataflow cut point 2 (upper left corner). The operators fired in this trace are \exampledfphii, \exampledfphilso, \exampledficmp, \exampledfsteeri, \exampledfsteerlso, \exampledfld, \exampledfincai, \exampledfst, \exampledfinci (which are inferred from the LLVM trace from cut point 2 to cut point 2). At the end of the trace, the configuration matches cut point 2 again. In each configuration, we mark the fired operators with orange lightning symbols. The emptiness of each channel after firing the operator(s) is represented by dashed vs solid lines. Before the firing of an operator, the input channels of the operator should be non-empty and marked with a solid line. After firing, the input channels are emptied, and the modified output channels will become non-empty and marked with a solid, orange line. The symbolic values in each channel are omitted, but their attached permission variables (starting with \texttt{p}) are marked. At the top of each configuration, we also indicate the permission constraints added at each step.}
    \Description{Cut point 2 trace.}
    \label{fig:example-dataflow-cut-point-2-to-cut-point-2}
\end{figure*}

\begin{figure}
    \centering
    \begin{tabular}{clll}
        Index & LLVM Program Point & Dataflow Configuration & Correspondence \\
        \hline
        1 & \texttt{\%entry} & Initial config & $\mem_\llvm = \mem_\df$ \\
        2 & \texttt{\%header} (from \texttt{\%body}) & First configuration in \Cref{fig:example-dataflow-cut-point-2-to-cut-point-2} & $\mem_\llvm = \mem_\df \wedge \texttt{\%i} = \texttt{i}$ \\
        $\bot$ & \texttt{\%end} & Final config & $\mem_\llvm = \mem_\df$
    \end{tabular}
    \caption{Cut points for the example in \Cref{fig:example}.
    For the second LLVM cut point, \project places it at the program point at the back edge from \texttt{\%body} to \texttt{\%header}.
    The correspondence equality $\mem_\llvm = \mem_\df$ means that the memory state should be the same at each cut point.
    The equality $\texttt{\%i} = \texttt{i}$ means that the LLVM variable $\texttt{\%i}$ at cut point 2 should be equal to the dataflow variable \texttt{i} referring to the value in the channel from operator \exampledfinci to \exampledfphii in \Cref{fig:example-dataflow-cut-point-2-to-cut-point-2}.
    We implicitly assume that all function parameters are equal (i.e. $\texttt{\%A} = \texttt{A} \wedge \texttt{\%B} = \texttt{B} \wedge \texttt{\%len} = \texttt{len}$).}
    \Description{Cut points for the example.}
    \label{fig:example-cut-points}
\end{figure}

\paragraph{Simulation Check}

The first step in \project's translation validation procedure is to check that the LLVM and dataflow program are equivalent on a canonical schedule of dataflow operators, where the dataflow operators are executed in the same order as their LLVM instruction counterparts.

This is done using a \emph{simulation relation}, or relational invariant, between the states of LLVM and dataflow programs.
\project constructs this relation by placing corresponding \emph{cut points} on both the LLVM and the dataflow sides.
A cut point symbolically describes a set of pairs of LLVM and dataflow
configurations satisfying a correspondence condition; a list of cut points together describes the simulation relation.
\Cref{fig:example-cut-points} shows the list of cut points for the example.
\project places cut point 1 for the initial configurations, cut point 2 for the loop structure, and a final cut point $\bot$ for all final configurations.
For each pair of LLVM and dataflow cut points, \project also infers a list of equations expressing the correspondence of symbolic variables in these cut points.

To automatically check that the proposed relation in \Cref{fig:example-cut-points} is indeed a simulation, we perform \emph{symbolic execution}~\cite{symbolic-execution} from all pairs of cut points (except for $\bot$) until they reach another pair of cut points.
On the LLVM side, we first symbolically execute cut point 1, which turns into two symbolic branches: reaching \texttt{\%header} after one loop iteration (cut point 2), or failing the loop condition and reaching \texttt{\%end} (cut point $\bot$).
For LLVM cut point 2, we also have two branches: one reaching cut point 2 again after a loop iteration, the other failing the loop condition and reaching cut point $\bot$.

For the dataflow side, we have a correspondence between a subset of LLVM instructions and a subset of dataflow operators, which is automatically generated by the dataflow compiler.
We use this mapping to infer the order in which we fire the dataflow operators, i.e., the canonical schedule.
For example, from LLVM cut point 2, there are two branches reaching cut point 2 and cut point $\bot$ respectively.
These two branches have two traces of LLVM instructions (excluding \texttt{br}):
\begin{itemize}
    \item For the first branch to cut point 2: lines \ref{line:example-phi-i}, \ref{line:example-icmp}, \ref{line:example-gep} - \ref{line:example-add-i};
    \item For the second branch to cut point $\bot$: lines \ref{line:example-phi-i}, \ref{line:example-icmp}.
\end{itemize}
For the first branch, the LLVM instructions map to dataflow operators \exampledfphii, \exampledfphilso, \exampledficmp, \exampledfld, \exampledfincai, \exampledfst, \exampledfinci (excluding ones that do not have a corresponding operator; operator \exampledfphilso corresponds to a \texttt{phi} instruction implicitly inserted during compilation for memory ordering).
LLVM instructions in the second branch map to dataflow operators \exampledfphii, \exampledfphilso, and \exampledficmp.
Therefore, for dataflow cut point 2, we execute these two traces of dataflow operators (and also any steer operator that can be executed), which gives us two dataflow configurations.
\Cref{fig:example-dataflow-cut-point-2-to-cut-point-2} shows the trace of execution of the first branch from cut point 2 to itself.
At the end of the execution, we can see that the configuration matches the cut point 2 we started with, except with a different memory state and symbolic variables (e.g. \texttt{i + 1} vs. \texttt{i} in the channel from operator \exampledfinci to \exampledfphii).

After a branch is matched to a cut point, we check the validity of the correspondence equations at the target cut points (given the assumptions made in the source cut point).
For example, for the branch shown in
\Cref{fig:example-dataflow-cut-point-2-to-cut-point-2}, we check that given the
source cut point correspondence ($\mem_\llvm = \mem_\df \wedge \texttt{\%i} =
\texttt{i}$), the target cut point correspondence holds:
\[
    \mem_\llvm[\texttt{\%B}[\texttt{\%i}] \mapsto \texttt{\%A}[\texttt{\%i}] + 1] = \mem_\df[\texttt{B}[\texttt{i}] \mapsto \texttt{A}[\texttt{i}] + 1] \wedge \texttt{\%i + 1} = \texttt{i + 1}
\]
If this check succeeds for all cut points, we soundly conclude that we have obtained a simulation relation between the LLVM program and the dataflow program on the canonical schedule.

\paragraph{Confluence Check}
\label{sec:example-confluence}

Once we have established a simulation relation between the LLVM program and a canonical schedule of the dataflow program, we check that the dataflow program is \emph{confluent}: all possible schedules converge to the canonical schedule in the final state (or synchronize with it infinitely often if the program is non-terminating).
This confluence result ensures that our simulation result for the canonical schedule is fully general for all other schedules.

We achieve this by using \emph{linear permission tokens}.
For this example, let us consider a simplified set of permission tokens $\{ 0, \readperm_1, \readperm_2, \writeperm \}$, where $0$ means no permission to read or write, and $\readperm$/$\writeperm$ means read/write permission to the entire memory, respectively.
Denote $p_1 + p_2$ as the disjoint sum of two permission tokens $p_1$ and $p_2$. The disjoint sum is a partial function satisfying $p + 0 = 0 + p = p$, $\readperm_1 + \readperm_2 = \writeperm$. However, $\writeperm + \writeperm$ is undefined.
Moreover, we partially order these permission tokens by $0 \permle \readperm_1, \readperm_2 \permle \writeperm$.
We will write $\readperm$ in place of $\readperm_1$ or $\readperm_2$ if the subscript is irrelevant.

In general, as defined in \Cref{sec:confluence}, we allow $\writeperm$ to split into $k$ tokens of $\readperm$, for a predetermined value of $k$, and we have one permission token for every memory region, such as the array \texttt{A} or \texttt{B}.
The intuition is that to write to a memory location, an operator needs to have exclusive write ownership of that location and no other operator should have $\writeperm$ or $\readperm$ permissions;
while to read a memory location, we allow potentially $k$ parallel reads to happen independently.
If all reads are done, and a write needs to occur, we merge all $\readperm$ tokens into one $\writeperm$ token.

These permission tokens are passed between operators via channels, and operators are not allowed to duplicate or create new tokens (i.e., permission tokens are used linearly).
We attach permissions to values flowing through the dataflow program, instead of creating separate channels for communicating permission tokens.

To determine the exact assignment and flow of permission tokens, we first attach a permission variable (representing a permission token that we do not know a priori) to each value in the channels of dataflow configurations in each symbolic branch.
\Cref{fig:example-dataflow-cut-point-2-to-cut-point-2} shows an example of this attachment of permission variables and permission constraints.
At the starting state cut point 2, we attach free permission variables $p_1, p_2, p_3, p_4$ to values in the configuration.
After the first step in which operators \exampledfphii and \exampledfphilso fire, these permission variables are consumed by the two carry operators and we generate fresh permission variables $p_5, p_6, p_7$ for the new output values.
We require that the permission variables are used in a linear fashion: operators cannot duplicate or generate permission tokens; instead, they have to be obtained from the inputs.
So for the first step in \Cref{fig:example-dataflow-cut-point-2-to-cut-point-2}, we add two constraints $p_5 + p_6 \permleq p_1 + p_3$ and $p_7 \permleq p_2 + p_4$ to say that the output permissions have be contained in the input permissions.
Furthermore, if the operator is a load or a store, such as in \Cref{fig:example-dataflow-cut-point-2-to-cut-point-2} steps 4 and 6, we require that the suitable permission ($\readperm$ for load, and $\writeperm$ for store) is present in the input permissions (e.g. $\writeperm \permleq p_{13} + p_{17 }$ at step 6 in \Cref{fig:example-dataflow-cut-point-2-to-cut-point-2}).

Finally, when the execution is finished and has either hit a final state or another cut point, we have collected a list of constraints involving a set of permission variables.
If the configuration is matched to a cut point, we add additional constraints to make sure that the assignment of tokens at the cut point is consistent.
For example in \Cref{fig:example-dataflow-cut-point-2-to-cut-point-2}, since the last configuration is matched again to the same cut point 2, we would add these additional constraints: $p_1 = p_{19}$, $p_2 = p_{18}$, $p_3 = p_8$, $p_4 = p_9$.
We then solve for the satisfiability of these constraints with respect to the permission algebra $(\{ 0, \readperm_1, \readperm_2, \writeperm \}, +, \permleq)$ that we have defined above.

If there is a satisfying assignment of the permission variables with permission tokens, then intuitively, the operators in the dataflow program are consistent in memory accesses, and there will not be any data races.
In \Cref{sec:confluence}, we show that this implies that any possible schedule will converge to the canonical schedule we use.
For the example in \Cref{fig:example-dataflow-cut-point-2-to-cut-point-2}, a satisfying assignment is $p_2 = p_7 = p_{15} = p_{16} = p_{17} = p_{18} = \writeperm$ with all other variables set to $0$.
This assignment essentially passes the write permission across the loop of operators $\exampledfphilso \to \exampledfsteerlso \to \exampledfld \to \exampledfincai \to \exampledfst \to \exampledfphilso$.

\section{Equivalence on a Canonical Schedule via Cut-Simulation}
\label{sec:sim-canonical}
We detail the first half of our translation validation technique, which
proves via \emph{cut-simulation} that the LLVM program is equivalent to the
dataflow program when restricted to a canonical schedule of dataflow operators.

Most of our formulation is independent of the source language.
Hence, instead of using LLVM directly, we assume some sequential and imperative input language called \emph{\seqlang} such that 1)~\seqlang is deterministic; 2)~\seqlang has a sequentially consistent memory model (as do dataflow programs); 3)~\seqlang programs operate on fixed-width integer values (as do dataflow programs).

Throughout this section, we fix a dataflow program and a \seqlang program, and we denote their operational semantics as transition systems 
$(\initialconfig_\df, \allconfig_\df, \to_\df)$ and $(\initialconfig_\seq, \allconfig_\seq, \to_\seq)$, respectively, where
$\initialconfig_\df$ and $\initialconfig_\seq$ are the sets of initial configurations, and $\allconfig_\df$ and $\allconfig_\seq$ are the sets of all
configurations.

We use $\mem_\df(\config_\df)$ to denote the state of the memory in a
dataflow configuration $\config_\df$, and similarly $\mem_\seq(\config_\seq)$ for
the memory in a \seqlang configuration $\config_\seq$.
We will omit the subscripts when it is clear from the context.
We assume that the memory map is modeled in the same way in both semantics, so that we can directly compare them by $\mem(\config_\df) = \mem(\config_\seq)$.
Additionally, both the dataflow and \seqlang programs have input parameters
(e.g., the pointer \texttt{\%A} in \Cref{fig:example}); thus, 
we assume a relation $\sim_\initialconfig\ \subseteq
\initialconfig_\df \times \initialconfig_\seq$ on initial configurations that
guarantees equal values on all input parameters.  

For a transition relation $\to\ \in \{ \to_\df, \to_\seq \}$, we use $\to^*$ for the reflexive-transitive closure of $\to$, and $\to^+$ for the transitive closure of $\to$.
A configuration $\config$ is \emph{terminating/final} if there exists no configuration $\config'$ such that $\config \to \config'$.
A \emph{trace} is a finite or infinite sequence of configurations $\tau = \config_1 \dots \config_i \dots$ such that $\config_i \to \config_{i + 1}$ for any $i$.
A trace $\tau$ is \emph{complete} if $\tau$ is infinite or the last configuration of $\tau$ is final.
We define $\tracesdf$ and $\tracesseq$ to be the set of all traces in the dataflow and \seqlang semantics, respectively.
For the dataflow semantics, we use $\to^o_\df$ to denote the deterministic firing of a particular operator $o$.


One important tool that we use is symbolic execution,
and in the following, we establish some notation for symbolic execution.
Let $\varlist{x}$ be a list of variables $\varlist{x} = (x_1, \dots, x_n)$.
Let $V$ be the set of all bit-vectors of a predetermined length (for modeling fixed-width integers).
We use $\term(\varlist{x})$ to denote the set of \emph{symbolic expressions}
constructed from constants in $V$, variables in $\varlist{x}$, and any
bit-vector operation we could perform in our semantics (e.g., a 32-bit zero
bit-vector $\texttt{0}_{32}$, or a term like $x_1 + \texttt{1}_{32}$ are in $\term(\varlist{x})$).
For $s \in \{ \df, \seq \}$, we use $\allconfig_s(\varlist{x})$ to represent the set of \emph{symbolic configurations with free variables $\varlist{x}$} in the semantics $s$.
A symbolic configuration $\config(\varlist{x}) \in \allconfig_s(\varlist{x})$ is
similar to a configuration but over symbolic expressions $\term(\varlist{x})$ instead of only $V$ (for example, a symbolic dataflow configuration can have symbolic expressions such as $x_1 + \texttt{1}_{32}$ in the channels).
If we have a substitution $\sigma: \{ x_1, \dots, x_n \} \to \term(\varlist{y})$ from $\varlist{x}$ to symbolic expressions over some other list of variables $\varlist{y}$, we denote $\config(\sigma) \in \allconfig_s(\varlist{y})$ to be the result of the substitution.
We use $\config(\varlist{x}) \to_s \config'(\varlist{x}) \mid \varphi(\varlist{x})$ to denote symbolic execution from $\config(\varlist{x})$ to $\config'(\varlist{x})$ with path condition $\varphi(\varlist{x})$.

A dataflow program does not have a notion of a return value, and the only observable state after finishing execution is the modified global memory.
Thus we first define when two traces are equivalent based on their memory state.

\begin{definition}[Memory-synchronizing traces]
Let $\tau_\df = \config^1_\df \dots \config^i_\df \dots$ and $\tau_\seq = \config^1_\seq \dots \config^j_\seq \dots$ be two complete traces starting from $\config^1_\df$ and $\config^1_\seq$, respectively.
The traces $\tau_\df$ and $\tau_\seq$ are \emph{memory-synchronizing} iff $\mem(\config^1_\df) = \mem(\config^1_\seq)$, and
\begin{itemize}
    \item If $\tau_\df$ is finite, then $\tau_\seq$ is finite and $\mem(\config'_\df) = \mem(\config'_\seq)$, where $\config'_\df$ and $\config'_\seq$ are the final states of $\tau_\df$ and $\tau_\seq$, respectively.
    \item If $\tau_\df$ is infinite, then $\tau_\seq$ is infinite, and for any $n \in \mathbb{N}$, there are $i, j \ge n$ such that $\mem(\config^i_\df) = \mem(\config^j_\seq)$.
\end{itemize}
\end{definition}
In other words, two complete dataflow and \seqlang traces are memory-synchronizing if they are both finite and have the same memory state in the initial and final configurations (i.e., both programs terminate and have the same final memory state); or they are both infinite and they synchronize in the same memory state infinitely often.

Now, we may define \emph{program equivalence}, which describes when two matching
initial configurations for dataflow and \seqlang exhibit the same observable
behaviors:
\begin{definition}[Program equivalence]
\label{def:full-equivalence}
Two programs $(\initialconfig_\df, \allconfig_\df, \to_\df)$ and $(\initialconfig_\seq, \allconfig_\seq, \to_\seq)$ are \emph{equivalent} if for any pair of initial states 
$\config_\df \sim_\initialconfig \config_\seq$ 
with $\mem(\config_\df) = \mem(\config_\seq)$, if we take any complete dataflow trace $\tau_\df = \config_\df \dots$ and the unique complete \seqlang trace $\tau_\seq = \config_\seq \dots$ (unique since \seqlang is deterministic), then $\tau_\df$ and $\tau_\seq$ are memory-synchronizing.
\end{definition}
Intuitively, the dataflow program and \seqlang program are equivalent if any pair of complete traces from corresponding initial states are memory-synchronizing; i.e., they either both terminate with the same memory state, or both run forever and synchronize infinitely often. 

Our final goal is to achieve \Cref{def:full-equivalence}.
We will do so by first verifying a weaker property of equivalence
along a particular, deterministic \emph{canonical} schedule of the dataflow
program. Then, in \Cref{sec:confluence}, we show that this canonical schedule
suffices to prove equivalence for all possible dataflow schedules. 

First, we introduce canonical schedules, which determinize dataflow programs using
the information contained in Seq traces:

\begin{definition}[Canonical schedule]
\label{def:canonical-schedule}
A \emph{canonical schedule} is a function $\canonsched: \allconfig_\df \times \tracesseq \to \tracesdf$ that selects a dataflow trace using a \seqlang trace, such that
\begin{itemize}
    \item For any $\config_\df \in \allconfig_\df$ and any non-empty \seqlang trace $\tau_\seq$, $\canonsched(\config_\df, \tau_\seq)$ is a dataflow trace starting in $\config_\df$.
    \item If $\tau_\seq$ is finite, then $\canonsched(\config_\df, \tau_\seq)$ is finite.
    \item For any $\config_\df \in \allconfig_\df$, if $\tau'_\seq$ is a prefix of $\tau_\seq$, then $\canonsched(\config_\df, \tau'_\seq)$ is a prefix of $\canonsched(\config_\df, \tau_\seq)$.
\end{itemize}
\end{definition}

We then specialize program equivalence (\Cref{def:full-equivalence}) by
restricting our attention to those canonical dataflow schedules:
\begin{definition}[Program equivalence on a canonical schedule]
\label{def:canonical-equivalence}
Let $\canonsched$ be a canonical schedule.
Two programs $(\initialconfig_\df, \allconfig_\df, \to_\df)$ and
    $(\initialconfig_\seq, \allconfig_\seq, \to_\seq)$ are \emph{equivalent on
    the canonical schedule $\canonsched$} if for any pair of initial states
    $\config_\df \sim_\initialconfig \config_\seq$ with $\mem(\config_\df) = \mem(\config_\seq)$, if we take the unique complete \seqlang trace $\tau_\seq = \config_\seq \dots$,
then $\canonsched(\config_\df, \tau_\seq)$ is a complete dataflow trace, and $\canonsched(\config_\df, \tau_\seq)$ and $\tau_\seq$ are memory-synchronizing.


\end{definition}
That is, instead of enforcing that all complete dataflow traces are memory-synchronizing to the complete \seqlang trace, we only require that there exists one such trace selected by some canonical schedule $\canonsched$.
Later in \Cref{sec:confluence}, we separately verify the \emph{confluence} property (\Cref{def:confluence}), so that \Cref{def:confluence} and \Cref{def:canonical-equivalence} together imply \Cref{def:full-equivalence} (\Cref{prop:full-equivalence}).

\paragraph{Canonical Schedules for RipTide}
\phantomsection
\label{example:canonical-schedule}
In our case study of verifying compilation from LLVM programs to RipTide dataflow programs, the specific canonical schedule $\canonsched$ is a heuristic designed for LLVM and the RipTide compiler.
We instrument the RipTide compiler to output a partial map $\hint: \{ I_1, \dots, I_m \} \rightharpoonup \{ o_1, \dots, o_n \}$, where $I_1, \dots, I_m$ are LLVM instructions in the LLVM program, and $o_1, \dots, o_n$ are operators in the dataflow program.
We require that the image of $\hint$ covers all dataflow operators except for steer operators (which are control-flow operators and have no direct correspondence to an LLVM instruction).
The canonical schedule $\canonsched(\config_\df, \tau_\llvm)$ is then inductively constructed following the LLVM instructions executed in $\tau_\llvm$:
\begin{equation*}
\canonsched(\config_\df, \tau_\llvm) := \begin{cases}
    \epsilon & \text{If } \tau_\llvm = \epsilon \\
    \tau_\df \cdot \canonsched(\config'_\df, \tau'_\llvm) & \text{If } \tau_\llvm = \config_\llvm \cdot \tau'_\llvm
\end{cases}
\end{equation*}
where $\epsilon$ is the empty trace, $\cdot$ denotes trace concatenation, $I$ is
the LLVM instruction executed at $\config_\llvm$, and $\tau_\df = \config_\df
\dots \config'_\df$ is a dataflow trace firing $\hint(I)$ and then any fireable
steer operators (which may be fired in any order, as they all commute).
If $\hint(I)$ is not defined or not fireable in $\config_\df$, then $\tau_\df = \epsilon$ and $\config'_\df = \config_\df$.

In the following sections, we outline the steps to verify that the dataflow and \seqlang programs satisfy \Cref{def:canonical-equivalence}.
In \Cref{sec:equiv-via-sim}, we first describe our general method of using a coinductive invariant called cut-simulation that implies the property in \Cref{def:canonical-equivalence}.
In \Cref{sec:sim-cut-points}, we show how cut-simulation can be symbolically represented by cut points.
In \Cref{sec:sim-check}, we present the algorithm to check that a set of cut points does form a cut-simulation.
Finally in \Cref{sec:sim-infer}, we describe our heuristics to infer cut points for the compilation from LLVM to dataflow programs.

\subsection{Proving Equivalence on the Canonical Schedule via Cut-Simulation}
\label{sec:equiv-via-sim}

Traces from $\initialconfig_\df$ and $\initialconfig_\seq$ are unbounded in general, so to prove the equivalence on a canonical schedule (\Cref{def:canonical-equivalence}),
we need to verify a coinductive invariant about the pairs of traces starting from $(\initialconfig_\df, \initialconfig_\seq)$, and such an invariant is a \emph{cut-simulation} relation~\cite{keq}.

\begin{definition}[Cut-simulation]
Let $\mathcal{R} \subseteq \allconfig_\df \times \allconfig_\seq$ be a binary relation.
We say that $\mathcal{R}$ is a \emph{cut-simulation} between $(\initialconfig_\seq, \allconfig_\seq, \to_\seq)$ and $(\initialconfig_\df, \allconfig_\df, \to_\df)$ iff for any $(\config_\df, \config_\seq) \in \mathcal{R}$, if $\config_\seq \to_\seq \config'_\seq$ for some $\config'_\seq$, then there are $(\config'_\df, \config''_\seq) \in \mathcal{R}$ such that $\config'_\seq \to^*_\seq \config''_\seq$ and $\config_\df \to^+_\df \config'_\df$.
\end{definition}

Intuitively, a cut-simulation states that if the two transition systems are
synchronized in $\mathcal{R}$, then if we can take a step on the \seqlang side, we can step the two transition systems further to re-synchronize them in $\mathcal{R}$.

Note that in our setting, we are verifying that the dataflow program simulates the \seqlang program (i.e., the target program simulates the source program), instead of the other way around which is common in compiler verification work.
This is because our target dataflow program has more nondeterministic behavior than the source \seqlang program, whereas in many other cases, the source program has more behavior (such as undefined behavior in C) than the concrete target program.


To satisfy our desired equivalence properties, we look for cut-simulations that are
\emph{memory-synchronizing}:
\begin{definition}
\label{def:memory-sync}
A binary relation $\mathcal{R} \subseteq \allconfig_\df \times \allconfig_\seq$ is \emph{memory-synchronizing} iff
\begin{itemize}
    \item For any $(\config_\df, \config_\seq) \in \mathcal{R}$, $\mem(\config_\df) = \mem(\config_\seq)$.
    \item For any initial configurations $\config_\df \sim_\initialconfig \config_\seq$ with $\mem(\config_\df) = \mem(\config_\seq)$, $(\config_\df, \config_\seq) \in \mathcal{R}$.
    \item For any $(\config_\df, \config_\seq)$ with $\config_\df, \config_\seq$ final and $\mem(\config_\df) = \mem(\config_\seq)$, $(\config_\df, \config_\seq) \in \mathcal{R}$.
\end{itemize}
\end{definition}
The second and third conditions enforce that $\mathcal{R}$ should include all pairs of corresponding initial states, as well as final states; thus $\mathcal{R}$ covers all complete traces.

If there exists a memory-synchronizing cut-simulation $\mathcal{R}$,
then we can conclude via a coinductive argument that the two programs are equivalent on a canonical schedule (\Cref{def:canonical-equivalence}).

\subsection{Describing a Cut-Simulation via Cut Points}
\label{sec:sim-cut-points}

To finitely describe a proposed cut-simulation, we use a list of pairs of dataflow and \seqlang \emph{cut points}.
A cut point is a pair $\cutpoint = (\config_\df(\varlist{x}), \config_\seq(\varlist{y})) \mid \varphi(\varlist{x}, \varlist{y})$ of symbolic dataflow and \seqlang configurations constrained by the correspondence condition $\varphi(\varlist{x}, \varlist{y})$.
In our case, the condition $\varphi(\varlist{x}, \varlist{y})$ is a conjunction of equalities between variables in $\varlist{x}$ and $\varlist{y}$.

Semantically, each cut point describes the binary relation $\mathcal{R}(\cutpoint) := \{ (\config_\df(\varlist{x}), \config_\seq(\varlist{y})) \in \allconfig_\df \times \allconfig_\seq \mid \varlist{x}, \varlist{y} \in V, \varphi(\varlist{x}, \varlist{y}) \}$.
For a list of $n$ cut points
\[ \{\cutpoint^i = (\config^i_\df(\varlist{x^i}),
\config^i_\seq(\varlist{y^i})) \mid \varphi^i(\varlist{x^i}, \varlist{y^i})
\}_{i = 1}^n, \] 
the relation they describe is the union of the subrelation each describes: $\mathcal{R}(\cutpoint^1, \dots, \cutpoint^n) := \bigcup^n_{i = 1} \mathcal{R}(\cutpoint^i)$.

To ensure that the cut-simulation we propose satisfies the three conditions in \Cref{def:memory-sync}, we only construct cut points such that:
\begin{itemize}
    \item For any cut point $\cutpoint^i$, the correspondence constraint $\varphi^i(\varlist{x}^i, \varlist{y}^i)$ implies equal memory $\mem(\config^i_\df(\varlist{x}^i)) = \mem(\config^i_\seq(\varlist{y}^i))$.
    \item The first cut point exactly represents the pairs of equivalent,
        initial configurations with equal memory; i.e.,
        $\mathcal{R}(\cutpoint^1) = \{ (\config_\df, \config_\seq) \in\ 
        \sim_\initialconfig\ \mid \mem(\config_\df) = \mem(\config_\seq) \}$.
    \item The last cut point $\cutpoint^n$ exactly represents the pairs of final configurations with equal memory; i.e., $\mathcal{R}(\cutpoint^n) = \{ (\config_\df, \config_\seq) \in \allconfig_\df \times \allconfig_\seq \mid \config_\df, \config_\seq \text{ final}, \mem(\config_\df) = \mem(\config_\seq) \}$.
\end{itemize}
Then $\mathcal{R}(\cutpoint^1, \dots, \cutpoint^n)$ satisfies \Cref{def:memory-sync}.

\subsection{Checking Cut-Simulation}
\label{sec:sim-check}

Now that we have a way to symbolically describe a relation $\mathcal{R}(\cutpoint^1, \dots, \cutpoint^n)$ with cut points satisfying \Cref{def:memory-sync},
we need to check that it is indeed a cut-simulation.

To achieve this, we perform symbolic execution from each cut point and try to reach another cut point.
Suppose we are checking cut point $\cutpoint^i = (\config^i_\df(\varlist{x^i}), \config^i_\seq(\varlist{y^i})) \mid \varphi^i(\varlist{x^i}, \varlist{y^i})$.
We perform the following three steps.

\paragraph{Step 1. Symbolic execution of \seqlang cut points}
We perform symbolic execution from $\config^i_\seq(\varlist{y^i})$ and get a list of $k_i$ symbolic branches with path conditions denoted by $\psi$:
\begin{gather*}
    \config^i_\seq(\varlist{y^i}) \to^+_\seq \config^{i, 1}_\seq(\varlist{y^i}) \mid \psi_{\seq, 1}(\varlist{y^i}) \\ 
    \dots \\
    \config^i_\seq(\varlist{y^i}) \to^+_\seq \config^{i, k_i}_\seq(\varlist{y^i}) \mid \psi_{\seq, k_i}(\varlist{y^i})
\end{gather*}
where each right hand side $\config^{i, l}_\seq(\varlist{y^i})$ for $l \in \{ 1,
\dots, k_i \}$ is an instance $\config^j_\seq(\sigma)$ of some \seqlang cut
point $\config^j_\seq(\varlist{y^j})$ with substitution $\sigma: \varlist{y^j}
\to \term(\varlist{y^i})$.
That is, at the end of each branch, the configuration is matched to some cut point.

Note that since the initial symbolic configuration $\config^i_\seq(\varlist{y^i})$ does not have any constraints, the final path conditions should cover all cases of $\varlist{y^i}$; in other words, $\bigvee^{k_i}_{l = 1} \psi_{\seq, l}(\varlist{y^i})$ is valid.

\paragraph{Step 2. Symbolic execution of dataflow cut points on the canonical schedule}
For the dataflow cut point $\config^i_\df(\varlist{x^i})$, since the dataflow semantics is nondeterministic, the choice of operators to fire is unclear.
Therefore, we follow a canonical schedule $\canonsched$ (\Cref{def:canonical-schedule}) to replicate the execution of \seqlang branches on the dataflow side.

For each \seqlang branch $l \in \{ 1, \dots, k_i \}$, we have a \seqlang trace $\tau_\seq = \config^i_\seq(\varlist{y}^i) \dots \config^{i, l}_\seq(\varlist{y}^i)$.
By mapping this trace through the canonical schedule $\tau_\df := \canonsched(\config^i_\df(\varlist{x^i}), \tau_\seq)$, we get a list of dataflow operators $(o^l_1, \dots, o^l_{n_l})$ fired in $\tau_\df$.
Then we symbolically execute the dataflow cut point $\config^i_\df(\varlist{x^i})$ on the schedule $(o^l_1, \dots, o^l_{n_l})$ with the path condition
$\psi_{\df, l}(\varlist{x^i}) := \exists \varlist{y^i} \varphi^i(\varlist{x^i}, \varlist{y^i}) \wedge \psi_{\seq, l}(\varlist{y^i})$:
\[
    \config^i_\df(\varlist{x^i}) \to^{o^l_1}_\df \dots \to^{o^l_{n_l}}_\df \config^{i, l}_\df(\varlist{x^i}) \mid \psi_{\df, l}(\varlist{x^i})
\]
The new path condition $\psi_{\df, l}(\varlist{x^i})$ amounts to replacing variables in the \seqlang path condition $\psi_{\seq, l}(\varlist{y^i})$ with corresponding variables in the correspondence constraint $\varphi^i(\varlist{x^i}, \varlist{y^i})$.
With the additional path condition imposed, the symbolic execution is not likely
to branch. If it does, the cut-simulation check will fail; in which case, a more precise
heuristic is needed for the canonical schedule.

By doing this for each \seqlang branch, we get a list of dataflow branches:
\begin{gather}
    \config^i_\df(\varlist{x^i}) \to^+_\df \config^{i, 1}_\df(\varlist{x^i}) \mid \psi_{\df, 1}(\varlist{x^i}) \label{stmt:dataflow-symbolic-execution} \\
    \dots \nonumber \\
    \config^i_\df(\varlist{x^i}) \to^+_\df \config^{i, k_i}_\df(\varlist{x^i}) \mid \psi_{\df, k_i}(\varlist{x^i}) \nonumber
\end{gather}
In our case, these branches should cover all possible values of $\varlist{x}^i$
(i.e., $\bigvee^{k_i}_{l = 1} \psi_{\df, l}(\varlist{x^i})$ is valid), since in
the path condition $\psi_{\df, l}(\varlist{x^i})$, the correspondence
constraint $\varphi^i(\varlist{x^i}, \varlist{y^i})$ only contains a conjunction of equalities between variables in $\varlist{x^i}$ and $\varlist{y^i}$,
and the union of all \seqlang path conditions $\bigvee^{k_i}_{l = 1} \psi_{\seq, l}(\varlist{y^i})$ is valid.
In general, however, if the correspondence constraint $\varphi^i(\varlist{x^i}, \varlist{y^i})$ is more complex (e.g. containing formulas such as $x_1 = 2 y_2$), we may need to additionally verify that $\bigvee^{k_i}_{l = 1} \psi_{\df, l}(\varlist{x^i})$ is valid to ensure that all possible concrete configurations represented by $\config^i_\df(\varlist{x^i})$ is covered in the symbolic execution.

\paragraph{Step 3. Cut point subsumption}
Finally, we check that each pair of dataflow and \seqlang branches $(\config^{i, l}_\df(\varlist{x^i}), \config^{i, l}_\seq(\varlist{y^i}))$
is contained in some target cut point $\cutpoint^j = (\config^j_\df(\varlist{x^j}), \config^j_\seq(\varlist{y^j})) \mid \varphi^j(\varlist{x^j}, \varlist{y^j})$:
\begin{equation}
    \label{stmt:cut-point-subsumption}
    \{ (\config^{i, l}_\df(\varlist{x^i}), \config^{i, l}_\seq(\varlist{y^i})) \mid \varphi^i(\varlist{x^i}, \varlist{y^i}) \wedge \psi_{\df, l}(\varlist{x^i}) \wedge \psi_{\seq, l}(\varlist{y^i}) \} \subseteq \mathcal{R}(\cutpoint^j)
\end{equation}
or equivalently, there are substitutions $\sigma^{i, l}_\df: \varlist{x^j} \to
\term(\varlist{x^i}), \sigma^{i, l}_\seq: \varlist{y^j} \to \term(\varlist{y^i})$ such that the following conditions hold:
\begin{itemize}
    \item The dataflow branch is an instance of the target dataflow cut point: $\config^{i, l}_\df(\varlist{x^i}) = \config^j_\df(\sigma^{i, l}_\df)$.
    \item The \seqlang branch is an instance of the target \seqlang cut point: $\config^{i, l}_\seq(\varlist{y^i}) = \config^j_\seq(\sigma^{i, l}_\seq)$.
    \item The correspondence condition at the target cut point is satisfied given the source cut point correspondence and the path conditions:
    \[
        \varphi^i(\varlist{x^i}, \varlist{y^i}) \wedge \psi_{\df, l}(\varlist{x^i}) \wedge \psi_{\seq, l}(\varlist{y^i}) \rightarrow \varphi^j(\sigma^{i, l}_\df, \sigma^{i, l}_\seq)
    \]
\end{itemize}

If the steps in this section are passed for all cut points $i$, we can conclude that $\mathcal{R}(\cutpoint^1, \dots, \cutpoint^n)$ is indeed a cut-simulation.



\subsection{Heuristics to Infer Cut Points}
\label{sec:sim-infer}

In this section, we describe our heuristics to infer cut points specific to LLVM and dataflow programs.
In general, program equivalence for Turing-complete models like LLVM and dataflow programs is undecidable,
so our best hope is to have good heuristics to infer the cut points for the specific input/output programs of the compiler.

The placement of cut points is important for the symbolic execution to terminate.
It is in theory correct (for terminating dataflow and LLVM programs) to have exactly two cut points: one at the initial configurations, and one for all final configurations.
But the symbolic execution will try to simulate all possible traces together, which are unbounded, and the check may not terminate.
To ensure termination, on the LLVM side, we place a cut point at the entry of
the function (i.e., the initial configuration), the back edge of each loop
header block, and the exit program point. 




For the dataflow program, we need to place the first cut point at the initial configuration of the dataflow program, and the last cut point at the final configuration.
For loops, ideally, we want to place the cut points at the corresponding ``program points'' at loop headers.
However, due to the lack of control-flow in the dataflow program, the form of the corresponding dataflow configurations at these ``program points'' is hard to infer statically.
Instead of generating the dataflow cut points ahead of time, we dynamically infer them during the symbolic execution of dataflow branches.
For example, after mirroring the LLVM branch execution in a dataflow branch such
as in \Cref{stmt:dataflow-symbolic-execution}, if the dataflow cut point
corresponding to the LLVM cut point has not been inferred yet, we generalize the
symbolic configuration on the RHS of the dataflow branch
\Cref{stmt:dataflow-symbolic-execution} to a dataflow cut point by replacing symbolic expressions with fresh variables.
The correspondence condition is then inferred using compiler hints to match fresh variables in the new dataflow cut point to an LLVM variable defined by some LLVM instruction.

Note that we do not need these heuristics to be sound.
As long as the cut points for the initial and final dataflow/LLVM configurations are generated correctly and the cut-simulation check in \Cref{sec:sim-check} succeeds, the soundness of the cut-simulation follows.
While we do not prove that these heuristics are complete, they work well in practice and cover all test cases in our evaluation (\Cref{sec:evaluation}).

If the heuristics generate incorrect cut points, the simulation check would still terminate but fail.
To see this, on the LLVM side, our placement of the cut points guarantees termination because we place one at each loop header.
On the dataflow side, since the execution is mirroring the LLVM execution and no additional search is done, the algorithm would terminate but may fail to match the resulting dataflow configuration against the cut point.



\section{Confluence Checking using Linear Permission Tokens}
\label{sec:confluence}

After checking the equivalence between the LLVM and dataflow programs on
a canonical dataflow schedule, we still need to prove that this canonical schedule is 
general.
Indeed, since the dataflow program is asynchronous and nondeterministic, other choices of
schedule may lead to data races and inconsistent final states.

Therefore, in addition to checking the equivalence on a canonical schedule, we want to prove a \emph{confluence} property on the dataflow program:
\begin{definition}
\label{def:confluence}
We say $(\initialconfig_\df, \allconfig_\df, \to_\df)$ is \emph{(ground) confluent} iff for any initial configuration $\config \in \initialconfig_\df$, if $\config \to^*_\df \config_1$ and $\config \to^*_\df \config_2$, then there is $\config' \in \allconfig_\df$ with $\config_1 \to^*_\df \config'$ and $\config_2 \to^*_\df \config'$.
\end{definition}
If the pair of dataflow and LLVM programs satisfies both \Cref{def:canonical-equivalence} and \Cref{def:confluence}, then we can conclude the desired equivalence property of \Cref{def:full-equivalence}.

\begin{proposition}
\label{prop:full-equivalence}
If $(\initialconfig_\df, \allconfig_\df, \to_\df)$ and $(\initialconfig_\llvm, \allconfig_\llvm, \to_\llvm)$ are equivalent on a canonical schedule (\Cref{def:canonical-equivalence}) and $(\initialconfig_\df, \allconfig_\df, \to_\df)$ is confluent, then $(\initialconfig_\df, \allconfig_\df, \to_\df)$ and $(\initialconfig_\llvm, \allconfig_\llvm, \to_\llvm)$ are equivalent (\Cref{def:full-equivalence}).
\end{proposition}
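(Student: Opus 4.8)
The plan is to unfold \Cref{def:full-equivalence} and reduce it, via \Cref{def:canonical-equivalence}, to a statement comparing an arbitrary complete dataflow trace with the canonical one. Concretely, fix initial states $\config_\df \sim_\initialconfig \config_\seq$ with $\mem(\config_\df) = \mem(\config_\seq)$, an arbitrary complete dataflow trace $\tau_\df = \config_\df \dots$, and the unique complete \seqlang trace $\tau_\seq = \config_\seq \dots$; the goal is that $\tau_\df$ and $\tau_\seq$ are memory-synchronizing. Let $\hat\tau_\df := \canonsched(\config_\df, \tau_\seq)$. By \Cref{def:canonical-equivalence} this $\hat\tau_\df$ is a complete dataflow trace starting in $\config_\df$ and is memory-synchronizing with $\tau_\seq$. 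Since $\mem(\config_\df) = \mem(\config_\seq)$ already holds, it remains to carry the ``limit behavior'' of $\hat\tau_\df$ over to $\tau_\df$: (a) if $\hat\tau_\df$ is finite then $\tau_\df$ is finite with the same final memory, and (b) if $\hat\tau_\df$ is infinite then $\tau_\df$ is infinite and visits, cofinally, the memory states along $\hat\tau_\df$ that witness its synchronization with $\tau_\seq$. Granting (a) and (b), memory-synchronization of $\tau_\df$ with $\tau_\seq$ follows ``through'' $\hat\tau_\df$. Both (a) and (b) are claims about two complete dataflow traces sharing a start configuration, which is where confluence (\Cref{def:confluence}) comes in.

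For (a), $\tau_\seq$ finite gives (by \Cref{def:canonical-schedule} and \Cref{def:canonical-equivalence}) that $\hat\tau_\df$ is finite and ends in a final configuration $d^\star$ with $\mem(d^\star)$ equal to the final memory of $\tau_\seq$. The easy half: if $\tau_\df$ is finite, ending in a final $e$, then $\config_\df \to^*_\df d^\star$ and $\config_\df \to^*_\df e$, so confluence gives a common reduct, which must be both $d^\star$ and $e$ (both final); hence $e = d^\star$, done. The real content of (a) is that $\tau_\df$ cannot be infinite: since (by the same common-reduct argument) every configuration reachable from $\config_\df$ can itself reach $d^\star$, this is the assertion that once a final configuration is reachable the reachable fragment of $\to_\df$ is terminating. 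Claim (b) is the contrapositive plus a synchronization argument: $\hat\tau_\df$ infinite means no final configuration is reachable from $\config_\df$, so the complete trace $\tau_\df$ must be infinite; and one then has to show that the memory states $\hat\tau_\df$ uses to synchronize with $\tau_\seq$ (essentially the memory at the end of each mirrored loop iteration) recur along $\tau_\df$ as well.

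I expect these (non)termination and recurrence facts to be the main obstacle, because they are \emph{not} consequences of ground confluence alone: the abstract rewriting system with $a \to a$, $a \to b$, and $b$ final satisfies \Cref{def:confluence} yet has a reachable normal form \emph{and} an infinite complete trace. What rescues the dataflow setting is that the permission-token construction of \Cref{sec:confluence} actually establishes a far more local commutation property on checked programs --- two operators enabled in the same configuration fire in either order to the \emph{same} configuration, since the only possible conflicts, racing memory accesses, are ruled out by the permission check and all other pairs of enabled operators already commute --- i.e. a diamond-type property. That property does imply that all maximal runs from a configuration have the same length (hence uniform termination/non-termination, giving the first halves of (a) and (b)) and that any two such runs differ only by reordering commuting firings (which lets one line up the memory states in (b)). So my concrete plan is to prove \Cref{prop:full-equivalence} from that stronger commutation property rather than from the bare statement of \Cref{def:confluence}; given it, the remaining work is the routine trace bookkeeping sketched above.
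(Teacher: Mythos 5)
Your top-level argument is the same as the paper's: fix corresponding initial states, let $\hat\tau_\df = \canonsched(\config_\df,\tau_\seq)$, use \Cref{def:canonical-equivalence} to get that $\hat\tau_\df$ is complete and memory-synchronizing with $\tau_\seq$, and then split on whether $\hat\tau_\df$ is finite or infinite, transferring its limit behavior to the arbitrary complete trace $\tau_\df$. The paper's proof sketch stops exactly there, asserting both transfers ``by confluence.'' Where you go further is in observing that ground confluence as literally stated in \Cref{def:confluence} does not deliver either transfer: your example $a \to a$, $a \to b$ with $b$ final is confluent, admits a finite complete trace, and also admits an infinite complete trace, so the claim ``$\hat\tau_\df$ finite implies $\tau_\df$ finite'' fails for abstract rewriting systems, and likewise the infinitely-often synchronization in the infinite case needs more than the existence of common reducts. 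This is a genuine observation about the paper's own sketch, not a defect of your proposal.

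Your proposed repair --- deriving the proposition from the local commutation/diamond property that the permission machinery actually certifies (\Cref{prop:consistent-commute}, \Cref{prop:consistent-commute-n}, \Cref{prop:consistent-commute-multiset}), which yields uniform termination/non-termination and reordering-up-to-commutation of complete traces --- is sound and is, in substance, what the paper's mechanized development relies on; the paper never uses bare \Cref{def:confluence} in isolation, since confluence is only ever established via the consistent-augmentation theorems. The trade-off is that your proof establishes a version of \Cref{prop:full-equivalence} with a strengthened hypothesis, whereas the proposition as stated (with only \Cref{def:confluence}) would additionally require showing that the dataflow transition relation has enough structure (e.g., that all complete traces from a configuration fire the same multiset of operators) to close the gap between ground confluence and uniform (non)termination. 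Flagging that gap, rather than silently writing ``by confluence,'' is the right call.
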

\begin{proof}[Proof Sketch]
Assume $(\initialconfig_\df, \allconfig_\df, \to_\df)$ and $(\initialconfig_\llvm, \allconfig_\llvm, \to_\llvm)$ are equivalent on some canonical schedule $\canonsched$ and $(\initialconfig_\df, \allconfig_\df, \to_\df)$ is confluent.
Let $\config_\df \sim_\initialconfig \config_\seq$ be any pair of corresponding initial states with $\mem(\config_\df) = \mem(\config_\seq)$, and let $\tau_\seq = \config_\seq \dots$ be the unique complete trace.
Let $\tau'_\df = \config_\df \dots$ be any complete trace from $\config_\df$.
We need to show that $\tau'_\df$ and $\tau_\seq$ are memory-synchronizing.

By equivalence on $\canonsched$, we know that $\canonsched(\config_\df, \tau_\seq)$ is a complete trace and $\canonsched(\config_\df, \tau_\seq)$ and $\tau_\seq$ are memory-synchronizing.
We have two cases:
\begin{itemize}
    \item If $\canonsched(\config_\df, \tau_\seq)$ is finite, then by confluence, $\tau'_\df$ is finite and has the same final state as $\canonsched(\config_\df, \tau_\seq)$ (since $\tau'_\df$ is complete).
    \item If $\canonsched(\config_\df, \tau_\seq)$ is infinite, then by confluence, $\canonsched(\config_\df, \tau_\seq)$ and $\tau'_\df$ synchronize infinitely often.
\end{itemize}
Therefore $\tau'_\df$ and $\tau_\seq$ are memory-synchronizing.
\end{proof}

In this section, we present a technique using what we call \emph{linear permission tokens} to ensure that the dataflow program is confluent.
Linear permission tokens are, in essence, a dynamic variant of fractional permissions~\cite{fractional-perm}.
In our case, the exact placement of permission tokens is resolved dynamically via symbolic execution.

We associate each value flowing through a dataflow program with a permission token in the form of $\writeperm\ l$, $\readperm\ l$, or any formal sum of them, for any memory location $l$.
These tokens are used by dataflow operators in a linear fashion: they cannot be replicated or generated, and they are disjoint in the initial state.
To support independent parallel reads and enforce that writes must be exclusive, a \writeperm\ token can be split into $k$ many \readperm\ tokens (for a predetermined parameter $k$), and inversely, $k$ of such \readperm\ tokens can be merged to get back a \writeperm\ token.

We show in this section that if there is a consistent assignment of these linear permission tokens to values, then we can conclude that the execution of dataflow program will be confluent.
This is done in two steps:
\begin{itemize}
    \item In \Cref{sec:confluence-bounded}, we prove that the technique works in a bounded setting for a concrete trace $\tau$: if $\tau$ has a consistent assignment of permission tokens, $\tau'$ is another trace from the same initial configuration, and $\tau'$ is bounded by $\tau$ (meaning that for each operator $o$, the number of times $o$ fired in $\tau'$ is less than or equal to the number of times $o$ fired in $\tau$), then there is a valid trace from the final configuration of $\tau'$ to the final configuration of $\tau$; i.e., $\tau'$ will converge to the final configuration of $\tau$.
    \item In \Cref{sec:confluence-full}, we explain how this bounded confluence check can be extended to a full confluence check for the entire dataflow program by using the cut points that we have placed for the dataflow program in \Cref{sec:sim-canonical}.
\end{itemize}
Finally in \Cref{sec:bounded-channel}, we show that even though we assume the channels in a dataflow program are unbounded, the cut-simulation and confluence checks allow us to extend the verification results to a bounded-channel model of dataflow programs.

\subsection{Confluence on Bounded Traces}
\label{sec:confluence-bounded}


In this section, we show that if we have a finite trace $\config_1 \to^{o_1}_\df \dots \to^{o_n}_\df \config_{n + 1}$ (where $o_1, \dots, o_n$ are the operators fired in each step) with a consistent assignment of permission tokens to each value in each configuration, then for any other trace $\config_1 \to^{o'_1}_\df \dots \to^{o'_m}_\df \config_{m + 1}$ (not necessarily with a consistent assignment of permission tokens) such that the multiset $\{ o'_1, \dots, o'_m \} \subseteq \{ o_1, \dots, o_n \}$, we have $\config_{m + 1} \to^*_\df \config_{n + 1}$.

Let us first describe the permission tokens and their assignment to values.
The permission tokens are drawn from the \emph{permission algebra} defined below.
\begin{definition}[Permission Algebra]
\label{def:permission-algebra}
Let $L$ be a set of memory locations and $k$ be a positive integer.
The \emph{permission algebra} is a partial algebraic structure $(P(L, k), 0, +, \permleq)$ where
\begin{itemize}
    \item $P(L, k)$ is the set of permissions such that $(P(L, k), 0, +)$ forms a partially commutative monoid.
    It consists of formal sums $\readperm\ l_1 + \dots + \readperm\ l_n$, with $l_1, \dots, l_n \in L$, modulo the commutativity and associativity of $+$.
    We require that for a location $l$, at most $k$ copies of $\readperm\ l$ are present in the formal sum.
    We denote $N(l, p)$ to be the number of occurrences of $\readperm\ l$ in $p$.
    We call $p + q$ the \emph{(disjoint) sum} of $p, q \in P(L, k)$.
    In particular, for any $l \in L$, $p + q$ is defined iff for all $l$, $N(l, p) + N(l, q) \leq k$.

    We also denote
    \[
        \writeperm\ l := \underbrace{\readperm\ l + \dots + \readperm\ l}_{k \text{ copies}}
    \]

    \item $P(L, k)$ is partially ordered by $\permleq$ with $p \permleq q$ (reads $q$ \emph{contains} $p$) iff for any $l$, the number of occurrences of $\readperm\ l$ in $p$ is less than or equal to that in $q$.
\end{itemize}
\end{definition}


Now we define when two or more permissions are considered \emph{disjoint}, which is crucial for enforcing linearity.
\begin{definition}
We say that the permissions $p_1, \dots, p_n \in P(L, k)$ are \emph{disjoint} (write $\permdisjoint(p_1, \dots, p_n)$) iff
for any $l \in L$ and $i \ne j$, $N(l, p_i) + N(l, p_j) \leq k$.
\end{definition}

For example, in $P(\{ \texttt{A}, \texttt{B} \}, 2)$, we have that 
\[ \readperm\ \texttt{B} + \readperm\ \texttt{A} + \readperm\ \texttt{A} =
\readperm\ \texttt{B} + \writeperm\ \texttt{A},\]
while $\readperm\ \texttt{A} + \writeperm\ \texttt{A}$ is not defined (i.e., $\readperm\ \texttt{A}$ and $\writeperm\ \texttt{A}$ are not disjoint).

Intuitively, the reason to allow a $\writeperm$ to split into $k$ copies of $\readperm$ is to allow the following two valid scenarios:
\begin{itemize}
    \item A store operator can write to a memory location if it has exclusive write permission to the location (i.e., no other operator has a $\writeperm$ or $\readperm$ permission to that memory location);
    \item A load operator can read from a memory location if it has a $\readperm$ permission and all other currently existing permissions in the system are also $\readperm$.
\end{itemize}
Furthermore, once (at most $k$) parallel reads are performed, a store operator can perform a memory write by reclaiming all existing $\readperm$ permissions and merging them into a $\writeperm$; hence we define $\writeperm$ as the sum of $k$ copies of $\readperm$.

Now let us define the assignment of permission tokens to values in a configuration.
Let $E$ denote the set of channel names.
For a configuration $\config \in \allconfig_\df$ and a channel $e \in E$, let $\channel(\config, e)$ denote the state of the channel $e$ in $\config$, which is a string over $V$ representing the values in the channel $e$.

\begin{definition}[Permission Augmentation]
\label{def:permission-augmentation}
Let $\config \in \allconfig_\df$ be a dataflow configuration.
A \emph{permission augmentation} of $\config$ is a partial map $t: E \times \mathbb{N} \rightharpoonup P(L, k)$ such that for any $e \in E$, $t(e, n)$ is defined iff $n \in \{ 1, \dots, |\channel(c, e)| \}$.
We call $(\config, t)$ a \emph{permission-augmented configuration}.
\end{definition}

Intuitively, $t(e, n)$ is the permission we attach to the $n$-th value in the channel $e$.
Permission augmentations defined above can assign arbitrary permissions to values.
However, to ensure race-freedom and confluence, we need to restrict them so that, for example, two operators cannot share a $\writeperm$ token.
In the following definitions, we define when a configuration, a transition, and a trace are \emph{consistently augmented with permissions}, meaning that permissions are used linearly.


\begin{definition}[Consistent Augmentation]
\label{def:consistent-augmentation}
We call $(\config, t)$ a \emph{consistent (permission-augmented) configuration} if the permissions in the image of $t$ are disjoint.
\end{definition}

\begin{definition}[Consistent Transition]
\label{def:consistent-transition}
Let $\config \to^o_\df \config'$ be a transition.
Let $t, t'$ be two consistent permission augmentations for $\config, \config'$, respectively.
Let $p_1, \dots, p_n$ be the permissions in $t$ attached to input values of $o$ and $q_1, \dots, q_m$ be the permissions in $t'$ attached to output values of $o$.
We say $(\config, t) \to^o_\df (\config', t')$ is a \emph{consistent (permission-augmented) transition} iff:
\begin{itemize}
    \item $(\config, t)$ and $(\config', t')$ are consistent.
    \item $t(e, n) = t'(e, n)$ for all channel $e$ and position $n$ except for those changed by $o$.
    \item (Linearity) $q_1 + \dots + q_m \permleq p_1 + \dots + p_n$.
    \item (Load Permission) If $o$ is a load operator on a memory location $l \in L$, then $\readperm\ l \permleq p_1 + \dots + p_n$.
    \item (Store Permission) If $o$ is a store operator on a memory location $l \in L$, then $\writeperm\ l \permleq p_1 + \dots + p_n$.
\end{itemize}
\end{definition}

\begin{definition}[Consistent Trace]
Let $\config_1 \to_\df \dots \to_\df \config_n$ be a trace.
Let $t_1, \dots, t_n$ be permission augmentations to $\config_1, \dots, \config_n$, respectively.
We say that $(\config_1, t_1) \to_\df \dots \to_\df (\config_n, t_n)$ is a \emph{consistent (permission-augmented) trace} iff for all $i \in \{ 1, \dots, n \}$, $(\config_i, t_i) \to_\df (\config_{i + 1}, t_{i + 1})$ is consistent.
\end{definition}

We now show some facts about a consistently permission-augmented transition or trace, leading to the final \Cref{prop:consistent-commute-multiset}.
First, we show that two consistent steps commute.

\begin{lemma}
\label{prop:consistent-commute}
Let $(\config_1, t_1) \to^{o_1}_\df (\config_2, t_2) \to^{o_2}_\df (\config_3, t_3)$ be a consistent trace.
Suppose $\config_1 \to^{o_2}_\df \config'_2$ for some configuration $\config'_2$.
Then $\config'_2 \to^{o_1}_\df \config_3$ and there is a permission augmentation $t'_2$ such that $(\config_1, t_1) \to^{o_2}_\df (\config'_2, t'_2) \to^{o_1}_\df (\config_3, t_3)$ is consistent.
\end{lemma}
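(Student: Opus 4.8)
The plan is to prove this in two layers: first the purely operational claim that the two transitions commute at the level of configurations, and then the bookkeeping claim that one can re-decorate the intermediate configuration $\config'_2$ with a consistent permission augmentation $t'_2$ making the swapped trace consistent. For the operational layer, I would appeal to the standard ``diamond'' property of asynchronous dataflow operators: since $o_1$ fires in $\config_1$ (witnessed by the original trace) and $o_2$ also fires in $\config_1$ (by hypothesis), and since the two firings touch disjoint channel endpoints — the only possible interference would be through shared memory, i.e. if both $o_1$ and $o_2$ were memory operators writing/reading a common location. This is exactly where the consistency hypothesis on the original trace does the work: the Load/Store Permission conditions force $o_1$ to hold $\readperm\ l$ or $\writeperm\ l$ for whatever location $l$ it touches, and $o_2$ to hold the corresponding permission for its location $l'$; since $t_1$ is a consistent augmentation (permissions in its image are disjoint) and both operators draw their permissions from $t_1$, we cannot have $o_1$ writing $l$ while $o_2$ reads or writes $l$ (that would require $\writeperm\ l + \readperm\ l$ or $\writeperm\ l + \writeperm\ l$ among disjoint permissions, which is undefined). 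Hence either $l \ne l'$, or both are reads of the same location — in both cases the memory effects commute, and $\config'_2 \to^{o_1}_\df \config_3$ with the same resulting configuration $\config_3$.

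For the permission layer, I would construct $t'_2$ explicitly. Outside the channels touched by $o_1$ and $o_2$, set $t'_2$ to agree with $t_1$ (equivalently with $t_2$ and $t_3$, which all agree there). On the channels that $o_2$ produces, use the permissions that $t_2$ assigned to those same output values when $o_2$ fired from $\config_1$ in the original trace — these are available because $o_2$ fires identically from $\config_1$ in both orders, producing the same output values. On the channels that $o_1$ consumes but $o_2$ does not touch, $t'_2$ still carries the $t_1$-permissions (unchanged by $o_2$), so $o_1$ has exactly the same input permissions available in $\config'_2$ as it had in $\config_1$; thus the Linearity and Load/Store-Permission conditions for the $o_1$-step out of $(\config'_2, t'_2)$ hold for the same reason they held for the original $o_1$-step out of $(\config_1,t_1)$, and the output permissions can be chosen to be the $t_2$-values on $o_1$'s outputs, which then propagate to give $t_3$ on the right. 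I would then check disjointness of the image of $t'_2$: since $t_1$ was disjoint and $o_2$'s inputs' permissions (removed) dominated $o_2$'s outputs' permissions (added) by the Linearity condition of the original $o_2$-step $(\config_2,t_2)\to^{o_2}(\config_3,t_3)$, the total multiset of $\readperm$-tokens per location in the image of $t'_2$ is pointwise bounded by that of $t_1$, hence still disjoint.

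The main obstacle I expect is the careful case analysis for the memory-commutation step, specifically ruling out the write/write and write/read conflicts using only the consistency of the \emph{single} starting augmentation $t_1$ rather than of the whole trace — one has to be precise that both $o_1$'s and $o_2$'s memory permissions are drawn from \emph{the same} augmentation $t_1$ of \emph{the same} configuration $\config_1$ (true because the original trace's first step is $(\config_1,t_1)\to^{o_1}(\config_2,t_2)$ and the hypothesis gives $\config_1\to^{o_2}\config'_2$, with $o_2$'s input channels unchanged by being in $\config_1$). A secondary subtlety is the case where $o_1$ and $o_2$ share an input channel endpoint or where one's output feeds the other's input; I would argue that if $o_2$'s firing in $\config_1$ consumed a value that $o_1$ also needs, the two firings from $\config_1$ could not both be enabled on the \emph{same} queue contents unless they read from genuinely different positions/channels — RipTide operators each dequeue from their own fixed input channels, so two distinct operators enabled in the same configuration never contend for the same channel slot, and an output of $o_1$ cannot already be an input of $o_2$ in $\config_1$ since $o_2$ is enabled \emph{before} $o_1$ fires. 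These structural facts about the dataflow semantics are the glue; once they are in place, the lemma follows by assembling $t'_2$ as above.
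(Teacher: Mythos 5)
Your proposal is correct and follows essentially the same route as the paper's proof sketch: a case split on whether the operators touch memory, using disjointness of the permissions drawn from the starting augmentation to rule out a write/write or write/read conflict (forcing the accessed locations to differ when one operator is a store), after which the firings commute operationally. Your explicit construction of $t'_2$ and your care that $o_2$'s input permissions under $t_2$ coincide with those under $t_1$ (since $t_1$ and $t_2$ agree outside the positions changed by $o_1$) fill in details the paper leaves to its mechanized Verus proof, but do not change the approach.
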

\begin{proof}[Proof Sketch]
If one of $o_1$ and $o_2$ is neither load nor store, or if both of them are loads, they can trivially commute with the same result.
Otherwise, one of $o_1$ and $o_2$ is a store operator (without loss of generality, assume $o_1$ is a store).
For $i \in \{ 1, 2 \}$, let $l_i$ be the memory location accessed by $o_i$ and $p_i$ be the sum of $o_i$'s input permissions.
Since they can both fire at $\config_1$, they should have suitable permissions: $\writeperm\ l_1 \permleq p_1$ and $\readperm\ l_2 \permleq p_2$.
Since $p_1$ and $p_2$ are disjoint, we have $l_1 \ne l_2$
Therefore, they are accessing different memory locations, and should give the same result no matter in which order we execute them.
\end{proof}

Inductively applying this lemma, we can swap the execution of an operator all the way to the beginning if it can fire and has not fired yet.

\begin{lemma}
\label{prop:consistent-commute-n}
Let $(\config_1, t_1) \to^{o_1}_\df \dots \to^{o_n}_\df (\config_{n + 1}, t_{n + 1})$ be a consistent trace.
Suppose $o_n$ is not in the set $\{ o_1, \dots, o_{n - 1} \}$ and $\config_1 \to^{o_n}_\df \config'_2$ for some $\config'_2$,
then $\config_1 \to^{o_n}_\df \config'_2 \to^{o_1}_\df \dots \to^{o_{n - 1}}_\df \config_k$ for some $\config'_2$ and there is a consistent augmentation for this trace.
\end{lemma}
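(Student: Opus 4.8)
The plan is to induct on the number of steps $n$: peel off the first transition $(\config_1,t_1)\to^{o_1}_\df(\config_2,t_2)$, use the induction hypothesis on the remaining $(n-1)$-step suffix to bring $o_n$ into the second position, and then finish with one application of \Cref{prop:consistent-commute} to swap $o_n$ past $o_1$.

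The base case $n=1$ is immediate: the reordered trace is just $\config_1\to^{o_1}_\df\config_2$, and determinism of $\to^{o_1}_\df$ forces $\config'_2=\config_2$, with the original augmentations $t_1,t_2$. For the inductive step I would first check that $o_n$ is still fireable at $\config_2$. This is a structural fact about the dataflow semantics: since $o_1\ne o_n$ and each channel is consumed by a unique operator, firing $o_1$ removes no tokens from $o_n$'s input channels and does not touch $o_n$'s local state; it can only append tokens to channels, which never invalidates an enabling condition. Hence $\config_2\to^{o_n}_\df\config''$ for some $\config''$.

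Now the suffix $(\config_2,t_2)\to^{o_2}_\df\cdots\to^{o_n}_\df(\config_{n+1},t_{n+1})$ is a consistent $(n-1)$-step trace with $o_n\notin\{o_2,\dots,o_{n-1}\}$ and $\config_2\to^{o_n}_\df\config''$, so the induction hypothesis yields a consistent trace $(\config_2,t_2)\to^{o_n}_\df(\config'',\bar t)\to^{o_2}_\df\cdots\to^{o_{n-1}}_\df(\config_{n+1},t_{n+1})$ that ends in the same augmented configuration (\Cref{prop:consistent-commute} preserves its two endpoint augmentations, and this is inherited through the induction). Its first two steps, namely $(\config_1,t_1)\to^{o_1}_\df(\config_2,t_2)\to^{o_n}_\df(\config'',\bar t)$, together with the hypothesized transition $\config_1\to^{o_n}_\df\config'_2$, let me invoke \Cref{prop:consistent-commute} once more: I obtain $\config'_2\to^{o_1}_\df\config''$ and an augmentation $t'$ making $(\config_1,t_1)\to^{o_n}_\df(\config'_2,t')\to^{o_1}_\df(\config'',\bar t)$ consistent. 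Concatenating this with the tail $(\config'',\bar t)\to^{o_2}_\df\cdots\to^{o_{n-1}}_\df(\config_{n+1},t_{n+1})$ gives the required trace.

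The routine parts are the index bookkeeping and checking that the augmentation obtained from the induction hypothesis on the suffix and the one obtained from \Cref{prop:consistent-commute} on the prefix agree at their shared endpoint $(\config'',\bar t)$, which they do by construction. The step I expect to take the most care is the structural claim that firing $o_1$ leaves $o_n$ enabled: this is the only place where the operator semantics (unique channel consumers, per-operator local state, monotonicity of enabling in channel contents) actually enter, and a careless argument here would be unsound for stateful operators such as \texttt{Carry}. Granting that fireability fact, the rest is a direct combination with \Cref{prop:consistent-commute}.
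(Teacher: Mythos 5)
Your proof is correct and follows essentially the same route as the paper, which gives no argument for this lemma beyond ``inductively applying'' \Cref{prop:consistent-commute} (the details live in the Verus formalization). Your one genuine addition --- making explicit the persistence fact that firing $o_1 \ne o_n$ cannot disable $o_n$, because channels have unique consumers, enqueuing never invalidates an enabling condition in the unbounded-channel model, and $o_n$'s local state is untouched --- is exactly the structural property the induction needs and the paper leaves implicit.
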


Applying the lemma above inductively, we have the following.

\begin{theorem}[Bounded Confluence for Consistent Traces]
\label{prop:consistent-commute-multiset}
Let $(\config_1, t_1) \to^{o_1}_\df \dots \to^{o_n}_\df (\config_{n + 1}, t_{n + 1})$ be a consistent trace.
Let $\config_1 \to^{o'_1}_\df \dots \to^{o'_m}_\df \config_{m + 1}$ be another trace.
If we have the \emph{multiset} inclusion $\{ o'_1, \dots, o'_m \} \subseteq \{ o_1, \dots, o_n \}$, then $\config_{m + 1} \to^*_\df \config_{n + 1}$.
\end{theorem}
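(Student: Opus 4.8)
The plan is to induct on $m$, the length of the second (not-necessarily-consistent) trace $\config_1 \to^{o'_1}_\df \dots \to^{o'_m}_\df \config_{m+1}$. The base case $m = 0$ is immediate: then $\config_{m+1} = \config_1$, and forgetting the permission augmentations from the hypothesized consistent trace already gives $\config_1 \to^{o_1}_\df \dots \to^{o_n}_\df \config_{n+1}$, hence $\config_{m+1} \to^*_\df \config_{n+1}$.

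For the inductive step I would peel off the first operator $o'_1$ of the second trace and match it against the consistent trace. Since $\{ o'_1, \dots, o'_m \} \subseteq \{ o_1, \dots, o_n \}$ as multisets, $o'_1$ must occur among $o_1, \dots, o_n$; let $k$ be the least index with $o_k = o'_1$, so that $o_k$ does not appear among $o_1, \dots, o_{k-1}$. Because $o'_1$ fires at $\config_1$ and the per-operator firing $\to^{o}_\df$ is deterministic, $\config_1 \to^{o_k}_\df \config'_2$ with $\config'_2$ exactly the successor appearing in the second trace. I would then apply \Cref{prop:consistent-commute-n} to the consistent prefix $(\config_1, t_1) \to^{o_1}_\df \dots \to^{o_k}_\df (\config_{k+1}, t_{k+1})$ — whose last operator is $o_k$ and does not occur earlier — to bubble $o_k$ to the front, obtaining a consistent trace $(\config_1, t_1) \to^{o_k}_\df (\config'_2, t'_2) \to^{o_1}_\df \dots \to^{o_{k-1}}_\df (\config_{k+1}, t_{k+1})$ ending in the same augmented configuration. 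Grafting the untouched suffix $(\config_{k+1}, t_{k+1}) \to^{o_{k+1}}_\df \dots \to^{o_n}_\df (\config_{n+1}, t_{n+1})$ back on gives a consistent trace from $(\config'_2, t'_2)$ to $(\config_{n+1}, t_{n+1})$ whose fired-operator multiset is $\{ o_1, \dots, o_n \} \setminus \{ o_k \}$. Meanwhile the tail of the second trace, $\config'_2 \to^{o'_2}_\df \dots \to^{o'_m}_\df \config_{m+1}$, has length $m - 1$ and fired-operator multiset $\{ o'_1, \dots, o'_m \} \setminus \{ o'_1 \} \subseteq \{ o_1, \dots, o_n \} \setminus \{ o_k \}$. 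Applying the induction hypothesis to these two traces starting from $\config'_2$ then yields $\config_{m+1} \to^*_\df \config_{n+1}$, closing the induction.

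The main obstacle, I expect, is the bookkeeping around the reordering rather than a new idea: one must use \Cref{prop:consistent-commute-n} in the form that it delivers a consistent reordered trace ending in \emph{precisely} the same permission-augmented configuration $(\config_{k+1}, t_{k+1})$, so that the original suffix can be reattached verbatim, and one must observe that consistency is preserved under this grafting because consistency of a trace is just the conjunction of the per-transition consistency conditions and none of the suffix transitions or their augmentations were modified. Two smaller points also need care: the determinism of $\to^{o}_\df$ is what lets us identify the $\config'_2$ produced by firing $o_k$ in the reordered consistent trace with the $\config'_2$ appearing in the second trace; and the inclusion $\{ o'_1, \dots, o'_m \} \subseteq \{ o_1, \dots, o_n \}$ must be manipulated as a multiset, so that deleting one copy of $o'_1 = o_k$ from each side preserves the inclusion (which is where the minimality of $k$, guaranteeing at least one copy of $o_k$ on the right, is used).
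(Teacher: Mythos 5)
Your proof is correct and follows essentially the same route as the paper, which itself only says ``applying the lemma above inductively'' and defers the details to a Verus formalization: you induct on the length of the second trace, use minimality of the first occurrence of $o'_1$ in $o_1,\dots,o_n$ to meet the hypothesis of \Cref{prop:consistent-commute-n}, bubble that operator to the front while preserving the final augmented configuration, reattach the untouched suffix, and invoke the induction hypothesis on the shortened traces. The points you flag as needing care (determinism of $\to^{o}_\df$ to identify $\config'_2$, preservation of the endpoint augmentation so the suffix grafts on, and multiset bookkeeping) are exactly the right ones, and all hold under the paper's definitions.
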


Although we have omitted much of the technical details of the proof, we have formalized a machine-checkable proof of \Cref{prop:consistent-commute-multiset} in Verus~\cite{verus}, a verification language, and the proof can be found in our GitHub repository~\cite{anonymous-repo}.

Practically, with this bounded confluence theorem, we can do bounded model checking for the confluence property:
we perform symbolic execution to obtain a trace of symbolic configurations
$\config_1(\varlist{x}) \to^{o_1}_\df \dots \to^{o_n}_\df \config_{n +
1}(\varlist{x})$, and then we can solve for a consistent augmentation $(t_1,
\dots, t_{n + 1})$ for this trace using an SMT solver.
If there is a solution, then it follows from \Cref{prop:consistent-commute-multiset} that any other trace (firing operators only in the multiset $\{ o_1, \dots, o_n \}$, counting multiplicity) would still converge to $\config_{n + 1}(\varlist{x})$ in the end.

\begin{algorithm}
    \caption{Full confluence checking algorithm (\Cref{sec:confluence-full}).}
    \label{alg:confluence-check}
    \begin{algorithmic}[1] 
        \Require{
            Inputs:
            \begin{itemize}
                \item $c^1_\df(\varlist{x}^1), \dots, c^n_\df(\varlist{x}^n)$: a list of dataflow cut points;
                \item $P(L, k)$: a permission algebra.
            \end{itemize}
        }
        \Ensure{If \textsc{ConfluenceCheck} succeeds, then the dataflow program is confluent.}
        \Procedure{ConfluenceCheck}{}
            \State $\psi \gets \top$ \Comment{Permission constraints to be accumulated.}
            \For{$i \in \{ 1, \dots, n \}$} \Comment{Initialize permission variables at cut points} \label{line:confluence-check-permission-init}
                \State $t_i \gets \textbf{empty\textunderscore map}()$ \Comment{$t_i: E \times \mathbb{N} \rightharpoonup PV$ (\Cref{def:permission-augmentation})}
                \For{each channel $e$ and $m \in \{ 1, \dots, |\channel(c^i_\df(\varlist{x}^i), e)| \}$}
                    \State $t_i(e, m) \gets \textbf{fresh\textunderscore variable}()$
                \EndFor
                \State $\psi \gets \psi \wedge \permdisjoint(t_i(e, m) \text{ for all } e, m)$ \Comment{\Cref{def:consistent-augmentation}} \label{line:confluence-check-disjoint}
            \EndFor
            \For{$i \in \{ 1, \dots, n \}$} \label{line:confluence-check-permission-main-loop}
                \State $Q \gets \textbf{empty\textunderscore queue}()$
                \State $\textbf{enqueue}(Q, (c^i_\df(\varlist{x}^i), t_i))$
                \While{$Q$ is non-empty}
                    \State $(c_\df(\varlist{x}^i), t) \gets \textbf{dequeue}(Q)$
                    \For{$(c'_\df(\varlist{x}^i), t') \in \textbf{step\textunderscore canonical}(c_\df(\varlist{x}^i), t)$} \Comment{For each symbolic branch} \label{line:confluence-check-queue-for}
                        \Statex
                        \LineComment{Constraints in \Cref{def:consistent-transition}, where $o$ is the fired operator.} \label{line:confluence-check-step-constraints-start}
                        \State $p_{\mathrm{in}} \gets$ sum of $o$'s input permission variables in $t$
                        \State $p_{\mathrm{out}} \gets$ sum of $o$'s output permission variables in $t'$
                        \State $\psi \gets \psi \wedge (p_{\mathrm{out}} \leq p_{\mathrm{in}})$
                        \If{$o$ reads memory location $l \in L$}
                            \State $\psi \gets \psi \wedge (\readperm\ l \leq p_{\mathrm{in}})$
                        \ElsIf{$o$ writes to memory location $l \in L$}
                            \State $\psi \gets \psi \wedge (\writeperm\ l \leq p_{\mathrm{in}})$
                        \EndIf \label{line:confluence-check-step-constraints-end}
                        \Statex

                        \For{$j \in \{ 1, \dots, n \}$}
                            \If{$\exists \sigma: \varlist{x}^j \to V(\varlist{x}^i) .\ c'_\df(\varlist{x}^i) = c^j_\df(\sigma)$} \Comment{Matched to a cut point} \label{line:confluence-check-subsumption}
                                \For{each channel $e$ and $m \in \{ 1, \dots, |\channel(c^j_\df(\varlist{x}^j), e)| \}$}
                                    \State $\psi \gets \psi \wedge (t'(e, m) = t_j(e, m))$ \label{line:confluence-check-cut-point-constraint}
                                \EndFor
                                \State \textbf{continue} Line~\ref{line:confluence-check-queue-for}
                            \EndIf
                        \EndFor
                        \State $\textbf{enqueue}(Q, (c'_\df(\varlist{x}^i), t'))$
                    \EndFor
                \EndWhile
            \EndFor
            \State \textbf{succeed} iff $\psi$ is satisfiable in $P(L, k)$
        \EndProcedure
    \end{algorithmic}
\end{algorithm}

\subsection{Full Confluence Checking using Cut Points}
\label{sec:confluence-full}

To extend the bounded confluence results in \Cref{sec:confluence-bounded} to checking confluence for the entire dataflow program with unbounded traces, we need to finitely describe and check a consistent pattern of permission augmentations for all traces.
For this purpose, we use the cut points selected in \Cref{sec:sim-canonical} and perform symbolic execution from them not only for symbolic values, but also for permissions.

Let $\cutpoint^1, \dots, \cutpoint^n$ be a list of cut points from the cut-simulation check (\Cref{sec:sim-check}), where each $\cutpoint^i = (c^i_\df(\varlist{x}^i), c^i_\seq(\varlist{y}^i)) \mid \varphi^i(\varlist{x}^i, \varlist{y}^i)$ is a pair of symbolic dataflow and \seqlang configurations with an additional correspondence constraint.
During the cut-simulation check in \Cref{sec:sim-check}, we verify that starting from any dataflow cut point $c^i_\df(\varlist{x}^i)$, any symbolic execution branch reaches an instance of another cut point $c^j_\df(\varlist{x}^j)$.
A consequence is that the symbolic branches from all dataflow cut points ``cover'' any complete trace in the canonical schedule (\Cref{def:canonical-schedule}).
In other words, any complete concrete trace in the canonical schedule is the concatenation of instances of symbolic branches.

Thus, if we are able to find a consistent permission augmentation for the trace of each dataflow symbolic branch during the cut-simulation check, then any concrete trace in the canonical schedule would have a consistent permission augmentation, which implies that any other schedule will converge to the canonical schedule, and the confluence of the entire dataflow program follows.

This full confluence check algorithm is procedurally described in \Cref{alg:confluence-check}.
The algorithm is parameterized with a list of dataflow cut points $c^1_\df(\varlist{x}^1), \dots, c^n_\df(\varlist{x}^n)$ and a predetermined permission algebra $P(L, k)$.
The algorithm first generates a fresh permission variable for each value present in the channels of the dataflow cut points.
Then it performs symbolic execution from each cut point, carrying constraints for both symbolic values and their corresponding permission variables, until the symbolic branch is subsumed by another cut point.
In the end, it checks for the satisfiability of the permission constraints.
If they are satisfiable, the confluence of the full program follows.

In more detail, the first loop at Line~\ref{line:confluence-check-permission-init} assigns fresh permission variables to each value present in each cut point, representing an unknown concrete permission token to be solved for.
We also add disjointness constraints at Line~\ref{line:confluence-check-disjoint} to enforce that these permission augmentations are consistent (\Cref{def:consistent-augmentation}) at the corresponding cut point configurations.

Then in the main loop at Line~\ref{line:confluence-check-permission-main-loop}, we symbolically execute from each dataflow cut point, trying to match each symbolic branch against other cut points.
The step function \textbf{step\textunderscore canonical} uses a deterministic canonical schedule (\Cref{def:canonical-schedule}) we constructed from \seqlang execution traces.
For example, in the case of LLVM and RipTide, we use the canonical schedule
described in \Cref{example:canonical-schedule}.
\textbf{step\textunderscore canonical} is also instrumented to take a permission augmentation $t$ for the current configuration $c_\df(\varlist{x}^i)$, and output a modified permission augmentation $t'$ for the stepped configuration $c'_\df(\varlist{x}^i)$, in which fresh variables are generated for all new output values in $c'_\df(\varlist{x}^i)$, and permissions corresponding to input values are removed.
In each step, for each symbolic branch, we also accumulate permission constraints (Lines~\ref{line:confluence-check-step-constraints-start}-\ref{line:confluence-check-step-constraints-end}) to enforce that the augmented transition $(c_\df(\varlist{x}^i), t)$ to $(c'_\df(\varlist{x}^i), t')$ is a consistent transition (\Cref{def:consistent-transition}).

At each symbolic step, we check if a symbolic branch is matched to another cut point $c^j_\df(\varlist{x}^j)$ (Line~\ref{line:confluence-check-subsumption}).
This check is performed by finding a substitution
$\sigma: \varlist{y}^j \to \term(\varlist{x}^i)$ such that the current symbolic branch $c'_\df(\varlist{x}^i) = c^j_\df(\sigma)$.
If the symbolic branch matches the cut point, we add equality constraints (Line~\ref{line:confluence-check-cut-point-constraint}) to enforce that the permissions in $t'$ should converge back to the original permissions $t_j$ at the matched $j$-th cut point.
Note that the substitution $\sigma$ is not later used because we do not need to check correspondence constraints (\Cref{sec:sim-check}).

The main loop at Line~\ref{line:confluence-check-permission-main-loop} should terminate, because if the cut-simulation check (\Cref{sec:sim-canonical}) passes, any branch from any dataflow cut point should eventually match another cut point.

Finally, we check if the accumulated permission constraint $\psi$ is satisfiable in the permission algebra $P(L, k)$ via an SMT query.
If $\psi$ is satisfiable, then any concrete trace (of unbounded length) in the canonical schedule has a consistent augmentation.
Thus, by \Cref{prop:consistent-commute-multiset}, any other possible schedule would eventually converge to the canonical schedule, and the dataflow program is confluent.
If $\psi$ is not satisfiable, then there are different possibilities: 1) the dataflow program is not confluent and may have data races (which require manual inspection to confirm); 2) the choice of $k$ in the permission algebra $P(L, k)$ is not large enough; or 3) the confluence property is not provable using this method.


\subsection{Liveness in the Bounded-Channel Model}
\label{sec:bounded-channel}


In our current semantics for dataflow programs, channels between operators are modeled as \emph{unbounded} queues.
In real CGRA architectures, however, the channels have a fixed buffer size, and an operator may block if one of the output channels is full.
As a result, a dataflow program in the bounded-channel model may have a deadlock that is not possible in the unbounded-channel model.

For example, consider \Cref{fig:deadlock}, where operator $\texttt{A}$ has two dataflow paths to $\texttt{B}$:
one path is full, while the other is empty due to a Steer operator $\texttt{T}$
discarding its inputs. 
This situation puts us in deadlock.
Operator $\texttt{B}$ cannot fire without tokens along both paths, while
operator $\texttt{A}$
cannot fire before more space is made along the full path. 

To verify that this scenario does not occur, we lift our cut-simulation and confluence results in the unbounded-channel model to the bounded-channel model via two observations:
\begin{itemize}
    \item The symbolic branches in the dataflow cut point execution (\Cref{sec:sim-check}) have finitely many intermediate symbolic configurations, so the maximum channel size is bounded by some integer $K$ in the canonical schedule.
    \item The confluence results in \Cref{sec:confluence-bounded} and \Cref{sec:confluence-full} are still valid in the bounded-channel model, which we have formalized and verified in Verus as well.
\end{itemize}
Therefore, if the cut-simulation and the confluence checks pass in the unbounded-channel model, the same results hold in the bounded-channel model if channels have a buffer size of at least $K$.
As a result, in the bounded-channel model, the dataflow program is still equivalent to the input imperative program,
which guarantees the liveness property of the dataflow program that it will always make progress if the sequential imperative program is able to make progress.
Furthermore, since the sequential program is deadlock-free, the dataflow program is also deadlock-free.




\newcommand{\evalnumprograms}{20\xspace}

\begin{figure}[ht!]
    \centering
    \begin{tabular}{lrrrrrl}
        Name & LOC & \#OP & \#PC & Sim. & Conf. & Description \\
        \hline
        \texttt{nn\textunderscore vadd} & 5 & 15 & 244 & 0.02 & 0.17 & Vector addition (RipTide) \\
        \texttt{nn\textunderscore norm} & 10 & 19 & 316 & 0.02 & 0.19 & Neural network normalization (RipTide) \\
        \texttt{nn\textunderscore relu} & 11 & 19 & 844 & 0.03 & 0.45 & Neural network ReLU layer (RipTide) \\
        \texttt{smv} & 12 & 25 & 1103 & 0.05 & 2.27 & Sparse-dense matrix-vector mult. (Pipestitch) \\
        \texttt{dmv} & 10 & 31 & 1234 & 0.06 & 1.64 & Dense-dense matrix-vector mult. (RipTide) \\
        \texttt{Dither} & 19 & 31 & 2553 & 0.13 & 2.29 & Dithering (Pipestitch) \\
        \texttt{SpSlice} & 30 & 31 & 2616 & $\times$ 0.13 & 7.94 & Sparse matrix slicing (Pipestitch) \\
        \texttt{nn\textunderscore fc} & 23 & 36 & 2632 & 0.09 & 3.35 & Neural network fully-connected layer (RipTide) \\
        \texttt{SpMSpVd} & 45 & 40 & 3182 & 0.15 & 8.38 & Sparse-sparse matrix-vector mult. (Pipestitch) \\
        \texttt{bfs} & 48 & 48 & 3634 & 0.15 & 7.68 & Breadth-first search (RipTide) \\
        \texttt{dfs} & 45 & 49 & 3762 & 0.17 & 7.76 & Depth-first search (RipTide) \\
        \texttt{smm} & 22 & 49 & 3967 & 0.16 & 11.2 & Sparse-dense matrix mult. (RipTide) \\
        \texttt{nn\textunderscore pool} & 32 & 52 & 8165 & 0.23 & 7.14 & Neural network pooling layer (RipTide) \\
        \texttt{dmm} & 16 & 56 & 4220 & 0.19 & 6.16 & Dense-dense matrix mult. (RipTide) \\
        \texttt{sconv} & 22 & 56 & 4272 & 0.2 & 10.37 & Sparse convolution (RipTide) \\
        \texttt{sort} & 23 & 57 & 6554 & $\times$ 0.14 & 3.31 & Radix sort (RipTide) \\
        \texttt{SpMSpMd} & 39 & 60 & 7693 & 0.33 & 26.68 & Sparse-sparse matrix mult. (Pipestitch) \\
        \texttt{nn\textunderscore conv} & 42 & 61 & 12154 & 0.2 & 15.32 & Neural network convolution layer (RipTide) \\
        \texttt{dconv} & 24 & 65 & 8134 & 0.25 & 11.51 & Dense convolution (RipTide) \\
        \texttt{fft} & 29 & 70 & 2690 & 0.12 & 5.77 & Fast Fourier transform (RipTide) \\
        \texttt{sha256} & 81 & 181 & 3548 & $\times$ 0.28 & 12.6 & SHA-256 hash
    \end{tabular}
    \caption{Evaluation results on \numevalprograms test cases sorted by the number of dataflow operators. From left to right, the columns are the name of the test case (Name), lines of code in the original C program (LOC), number of dataflow operators (\#OP), number of permission constraints (\#PC), time spent for simulation check in seconds (Sim.), and time spent for confluence check in seconds (Conf.).
    All test cases are originally written in C and then compiled to LLVM via Clang~\cite{clang}.
    Compiler bugs cause the simulation checks to fail in \texttt{SpSlice}, \texttt{sort}, and \texttt{sha256}. 
    In all other cases, the simulation and confluence checks succeed.}
    \Description{Evaluation results.}
    \label{fig:evaluation}
\end{figure}

\section{Implementation and Evaluation}
\label{sec:evaluation}

We have implemented our translation validation technique in the tool \project targeting the RipTide compiler from LLVM programs to dataflow programs.
It is publicly available in a GitHub repository~\cite{anonymous-repo}.
We implemented symbolic executors for (a subset of) LLVM and dataflow programs in Python.
Based on these, we implemented the cut-simulation check described in \Cref{sec:sim-canonical} and the confluence check in \Cref{sec:confluence}.
We also instrumented the RipTide compiler to output hints for constructing the canonical schedule (\Cref{example:canonical-schedule}).

\project uses the Z3 SMT solver~\cite{z3} to discharge the verification conditions it generates.
In the simulation check, \project encodes feasibility of path conditions and validity of the correspondence conditions at each cut point into Z3 queries.
In the confluence check, \project produces a list of permission constraints and checks if these constraints are satisfiable with respect to the finite permission algebra (\Cref{def:permission-algebra}) by encoding each permission variable as a set of Boolean variables and the list of permission constraints as an SMT query.



In order to evaluate \project's capability to certify compilation to real
dataflow programs, we have applied \project to a benchmark of \numevalprograms
programs consisting of the benchmark programs in RipTide~\cite{riptide} and
Pipestitch~\cite{pipestitch}, as well as some programs implementing neural
network inference and an implementation of SHA-256~\cite{sha256-impl}.
\Cref{fig:evaluation} shows some statistics about the benchmark programs and how \project performs.
All tests are performed on a laptop with an Apple M1 processor and 64 GiB of RAM.
The time spent in the RipTide compiler (including hint generation) is less than 0.1 seconds for all of the benchmark programs, so it is omitted from the table.

We now evaluate \project in three aspects: generality of cut-point
heuristics, performance, and size of the trusted computing base.%
\paragraph*{Generality of the heuristics}
Both simulation and confluence checks pass for most of the benchmark programs.
This shows that our heuristics for inferring dataflow cut points work well on
the canonical schedule, and our confluence check is general enough for common
compilation patterns.
In test cases \texttt{sort}, \texttt{SpSlice}, and \texttt{sha256}, the simulation check fails and correctly reveals two compilation bugs (see \Cref{sec:compiler-bugs} for more detail).
After fixing these bugs, the simulation and confluence checks succeed for these two programs.




\paragraph*{Performance}
\project spends most of its time on the confluence check, mainly consisting of
checking satisfiability for permission constraints.
The simulation check takes a relatively short amount of time, because the input LLVM program and the output dataflow program match well on the canonical schedule and thus the verification conditions are simple to solve.
Overall, the verification time is within 15 seconds for most benchmark programs, and we believe it is feasible to enable \project in real production scenarios.

\paragraph*{Trusted Computing Base}
The original RipTide compiler has \textasciitilde 21,000 lines of C++ code, excluding comments. 
The instrumented hint generation in the RipTide compiler is implemented in less than 50 lines of C++.
Compared to the RipTide compiler, the trusted computing base of \project has 3,121 lines of Python with three main components: \textasciitilde 800 lines of dataflow semantics; \textasciitilde 1,000 lines of LLVM semantics; and \textasciitilde 750 lines for simulation and confluence checks.

\begin{figure}
    \centering
    \begin{minipage}[b]{0.6\linewidth}
        \centering
        \includegraphics[width=\linewidth]{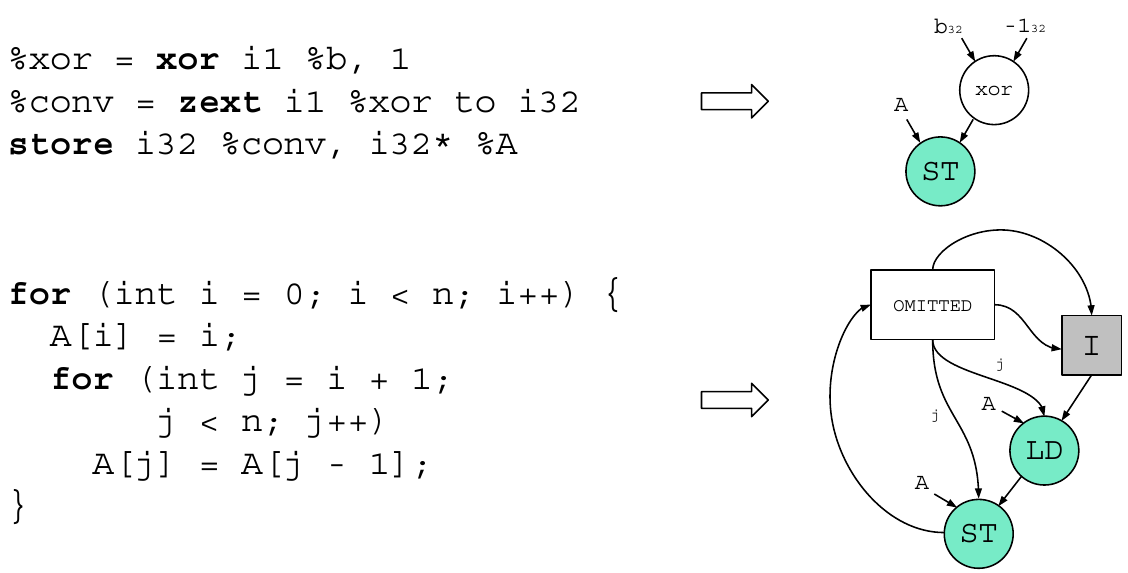}
        \caption{Examples of two compiler bugs.}
        \Description{Examples of two compiler bugs.}
        \label{fig:bugs}
    \end{minipage} \qquad
    \begin{minipage}[b]{0.3\linewidth}
        \centering
        \includegraphics[width=0.65\linewidth]{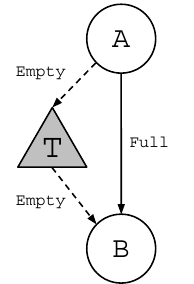}
        \caption{Deadlock example.}
        \Description{Deadlock example.}
        \label{fig:deadlock}
    \end{minipage}
\end{figure}

\subsection{Compiler Bugs}
\label{sec:compiler-bugs}

While testing \project, we found \numbugs bugs in the RipTide compiler, confirmed by RipTide authors.
Four of these bugs were discovered on the test cases \texttt{sort}, \texttt{SpSplice}, \texttt{sha256}, and their variations;
and the remaining four bugs were found on test cases not included in the evaluation.
We now discuss two representative bugs:
\paragraph{Incorrect signed extension}

\project found a compilation bug involving integer signed vs. unsigned extension in the \texttt{sort} test case.
A simplified LLVM program used to reproduce this issue is shown in the top left of \Cref{fig:bugs}, and the output dataflow program is shown in the top right of \Cref{fig:bugs}.
The LLVM code performs a 1-bit xor with a 1-bit constant \texttt{1}, and it zero-extends the result to a 32-bit integer.
On the dataflow side, the dataflow compiler converted all operations into 32-bits (word length of RipTide).
However, during the process, the compiler assumes that all constants should be sign-extended.
As a result, it extends the 1-bit \texttt{1} to a 32-bit \texttt{-1}, which is incorrect in this case.
This difference causes our simulation check to fail, as the final memory states fail to match.

\paragraph{Incorrect memory ordering with potential data races}
Our confluence check catches a compilation bug when compiling from the C program shown in the bottom left of \Cref{fig:bugs} to the dataflow program in the bottom right of \Cref{fig:bugs}.
The C program has a nested loop.
In each iteration of the outer loop, it first sets \texttt{A[i]} = {i}, and then in the inner loop that follows, it copies \texttt{A[i]} into \texttt{A[i + 1]}, \dots, \texttt{A[n - 1]}.
In the output dataflow program, the compiler allows the load and store operators in the inner loop to pipeline using the invariant operator (marked \texttt{I}) to repeatedly send the signal to enable load.
However, this causes a data race, because in the inner loop, a load in the second iteration accesses the same memory location as the store in the first iteration.

The confluence check fails as the generated permission constraints are unsatisfiable.
\project also outputs an unsat core of 57 constraints, a subset of the 420 constraints generated, that causes the unsatisfiability.
In particular, the unsat core includes two constraints:
$\readperm\ \texttt{A}$ is in the input tokens of the inner loop load, and $\writeperm\ \texttt{A}$ is in the input tokens of the inner loop store.
This indicates that the main conflict stems from these two operators.

\section{Limitations and Future Work}
\label{sec:limitations}

Our translation validation technique still has some limitations to address in future work.

\paragraph{Optimized dataflow graphs}

Our technique may fail on more optimized programs.
The RipTide compiler employs some optimizations currently not supported by \project:
\begin{itemize}
    \item Deduplication of operators, which combines arithmetic operators with the same inputs.
    \item Streamification, which uses a single stream operator to generate a loop induction variable (as opposed to using, e.g., carry, comparison, and addition operators).
    \item Array dependency analysis, which removes unnecessary orderings between loads and stores across iterations that access different indices of an array. 
\end{itemize}
The first two optimizations are not supported by our simulation check because we require a one-to-one correspondence between LLVM instructions and dataflow operators for the canonical schedule.
Since these optimizations involve reasoning about the equivalence between dataflow programs, we believe that a modular approach is to use \project to verify the input LLVM against the unoptimized dataflow program, and use another verification pass in future work to soundly optimize the dataflow program.

The third optimization may lead to a dataflow program failing our confluence check.
To solve this, we need finer granularity in permission tokens.
Currently, the \readperm\ and \writeperm\ permissions are for the entire memory region, such as an array \texttt{A}.
To allow memory operations on \texttt{A[i]} and \texttt{A[i + 1]} to run in parallel, where \texttt{i} is the loop induction variable, we need to re-formulate the permission algebra to allow slices of a permission token on a specific range of indices.
This also requires a form of ``dependent'' permission tokens that can depend on the values in the channels.

\paragraph*{Scalability of the confluence check}
\project currently spends most of its time on the confluence check (\Cref{sec:evaluation}).
We believe this is primarily due to a naive encoding of permission constraints in SMT.
For better performance in future work, one could design a custom solver for permission constraints to better utilize the structure of a permission algebra, or use a preprocessing step to simplify permission constraints.




\paragraph{Lack of architecture-specific details}

This work targets the compilation from imperative programs to an abstract version of dataflow programs.
While our technique can ensure the correctness of this pass, we do not verify later passes, such as the procedure to optimally configure the dataflow program on the actual hardware.


\section{Related Work}
\label{sec:related}

As a compiler verification technique, the translation validation approach taken in our work is a less labor-intensive alternative to full compiler verification, such as CompCert~\cite{compcert}.
Translation validation was first proposed by \citet{pnueli-tv}, and there are numerous following works with novel translation validation or program equivalence checking techniques for various languages and compilers~\cite{necula-tv,kundu-tv,data-driven-eq,tate-eq-saturation}.
For LLVM-related works on translation validation, we have Alive~\cite{alive,alive2}, which performs bounded translation validation for internal optimizations in LLVM.
LLVM-MD~\cite{llvm-md,denotational-tv} is another translation validation tool for LLVM optimizations.
LLVM-MD translates the LLVM code into an intermediate representation called synchronous value graphs (SVG)~\cite{denotational-tv}, which is a similar representation to a dataflow program.
The difference is that the semantics of SVG is synchronous (in the sense that values are not buffered in the channels), and they also trust the compilation from LLVM to SVG, whereas in our case, such translation is exactly what we are trying to validate.
Our formulation of cut-simulation in our simulation check is an adaptation from the work on \textsc{Keq}~\cite{keq}, which aims to be a more general program equivalence checker parametric in the operational semantics of input/output languages.

Our use of linear permission tokens is a dynamic variant of the fractional
permissions proposed by Boyland~\cite{fractional-perm} and used in various separation logics~\cite{concurrent-separation-logic} to reason about read-only sharing of references.
Tools based on separation logics such as Iris~\cite{iris} and Viper~\cite{viper} usually require manual annotations for permission passing in the pre-/post-conditions, whereas our permission tokens are 
synthesized fully automatically.
Furthermore, separation logic is not a suitable formalism to directly reason about dataflow programs due to the lack of structured control-flow in a dataflow program and the asynchrony of dataflow operators.


Various models of dataflow programming~\cite{dataflow-programming-survey} have been studied, such as computation graphs~\cite{computation-graphs} and Kahn process networks~\cite{kahn-process-networks}.
An important aspect of these works is determinacy; i.e., these models produce a deterministic result regardless of the schedule of execution.
There are also verification tools for classical dataflow languages such as Lustre~\cite{lustre,lustre-verified-compiler,lustre-smt-verification} and Esterel~\cite{esterel-verification}.
However, dataflow programs used by CGRA compilers often make use of shared
global memory, which requires stronger techniques for guaranteeing determinacy.


Process calculi such as CSP~\cite{csp} and $\pi$-calculus~\cite{pi-calculus} present an alternative
formalism for message-passing concurrency.
To manage the inherent complexity and nondeterminism in process calculi, session types can enforce confluence and deadlock-freedom by construction~\cite{pi-calculus-typed}.
However, similar to separation logic, session types require manual typing annotations that would be difficult to synthesize automatically, and we likely still need some form of fractional permissions to handle shared memory.

Our strategy for analyzing nondeterministic dataflow programs is reminiscent of
verification strategies for distributed
systems~\cite{canonical-sequentialization,inductive-sequentialization,pretend-synchrony} 
which analyze message-passing protocols (e.g., Paxos~\cite{paxos}) via reductions to sequential programs.
However, the dataflow domain requires significantly different techniques: we
verify \emph{equivalence properties} with arbitrary specification programs, 
and use \emph{linear permissions} to enable safe memory accesses in the presence of 
asynchrony.




In the context of term rewriting systems~\cite{term-rewriting}, the standard approach for confluence checking is the use of critical pairs~\cite{knuth-critical-pairs, duran-confluence}, which proves ground confluence in the case when the rewriting system is terminating.
This algorithm is implemented in various systems, such as the Church-Rosser checker in Maude~\cite{maude}.
However, this technique is difficult to apply in our case, since the semantics of a dataflow program is neither terminating nor, in general, ground confluent.


\section{Conclusion}
\label{sec:conclusion}

In this work, we develop a technique for translation validation between
imperative source programs and asynchronous dataflow target programs, with an
implementation for the RipTide CGRA architecture.
Our technique simplifies analysis on dataflow by first verifying that the target
program simulates the source program along a canonical schedule, and then using
linear memory permissions to show that this canonical schedule is general. 
Our verification procedure ensures both functional correctness and liveness for the target dataflow program.

In future work, we aim to support optimizations on dataflow programs (e.g.,
special-purpose operators for pipelining loops), and extending our verification
methodology to concrete instantiations of CGRAs in hardware.

\bibliography{references}


\begin{thebibliography}{51}


\ifx \showCODEN    \undefined \def \showCODEN     #1{\unskip}     \fi
\ifx \showDOI      \undefined \def \showDOI       #1{#1}\fi
\ifx \showISBNx    \undefined \def \showISBNx     #1{\unskip}     \fi
\ifx \showISBNxiii \undefined \def \showISBNxiii  #1{\unskip}     \fi
\ifx \showISSN     \undefined \def \showISSN      #1{\unskip}     \fi
\ifx \showLCCN     \undefined \def \showLCCN      #1{\unskip}     \fi
\ifx \shownote     \undefined \def \shownote      #1{#1}          \fi
\ifx \showarticletitle \undefined \def \showarticletitle #1{#1}   \fi
\ifx \showURL      \undefined \def \showURL       {\relax}        \fi
\providecommand\bibfield[2]{#2}
\providecommand\bibinfo[2]{#2}
\providecommand\natexlab[1]{#1}
\providecommand\showeprint[2][]{arXiv:#2}

\bibitem[Ackerman(1982)]%
        {ackerman-dataflow}
\bibfield{author}{\bibinfo{person}{Ackerman}.} \bibinfo{year}{1982}\natexlab{}.
\newblock \showarticletitle{Data Flow Languages}.
\newblock \bibinfo{journal}{\emph{Computer}} \bibinfo{volume}{15},
  \bibinfo{number}{2} (\bibinfo{year}{1982}), \bibinfo{pages}{15--25}.
\newblock
\urldef\tempurl%
\url{https://doi.org/10.1109/MC.1982.1653938}
\showDOI{\tempurl}


\bibitem[Arvind and Culler(1986)]%
        {dataflow-architectures}
\bibfield{author}{\bibinfo{person}{Arvind} {and} \bibinfo{person}{David~E.
  Culler}.} \bibinfo{year}{1986}\natexlab{}.
\newblock \bibinfo{booktitle}{\emph{Dataflow Architectures}}.
\newblock \bibinfo{publisher}{Annual Reviews Inc.}, \bibinfo{address}{USA},
  \bibinfo{pages}{225–253}.
\newblock
\showISBNx{0824332016}


\bibitem[Baader and Nipkow(1998)]%
        {term-rewriting}
\bibfield{author}{\bibinfo{person}{Franz Baader} {and} \bibinfo{person}{Tobias
  Nipkow}.} \bibinfo{year}{1998}\natexlab{}.
\newblock \bibinfo{booktitle}{\emph{Term Rewriting and All That}}.
\newblock \bibinfo{publisher}{Cambridge University Press},
  \bibinfo{address}{UK}.
\newblock
\urldef\tempurl%
\url{https://doi.org/10.1017/CBO9781139172752}
\showDOI{\tempurl}


\bibitem[Bakst et~al\mbox{.}(2017)]%
        {canonical-sequentialization}
\bibfield{author}{\bibinfo{person}{Alexander Bakst}, \bibinfo{person}{Klaus von
  Gleissenthall}, \bibinfo{person}{Rami~G{\"o}khan Kici}, {and}
  \bibinfo{person}{Ranjit Jhala}.} \bibinfo{year}{2017}\natexlab{}.
\newblock \showarticletitle{Verifying distributed programs via canonical
  sequentialization}.
\newblock \bibinfo{journal}{\emph{Proceedings of the ACM on Programming
  Languages}}  \bibinfo{volume}{1} (\bibinfo{year}{2017}), \bibinfo{pages}{1 --
  27}.
\newblock
\urldef\tempurl%
\url{https://api.semanticscholar.org/CorpusID:23362468}
\showURL{%
\tempurl}


\bibitem[Baldoni et~al\mbox{.}(2018)]%
        {symbolic-execution}
\bibfield{author}{\bibinfo{person}{Roberto Baldoni}, \bibinfo{person}{Emilio
  Coppa}, \bibinfo{person}{Daniele~Cono D’elia}, \bibinfo{person}{Camil
  Demetrescu}, {and} \bibinfo{person}{Irene Finocchi}.}
  \bibinfo{year}{2018}\natexlab{}.
\newblock \showarticletitle{A Survey of Symbolic Execution Techniques}.
\newblock \bibinfo{journal}{\emph{ACM Comput. Surv.}} \bibinfo{volume}{51},
  \bibinfo{number}{3}, Article \bibinfo{articleno}{50} (\bibinfo{date}{may}
  \bibinfo{year}{2018}), \bibinfo{numpages}{39}~pages.
\newblock
\showISSN{0360-0300}
\urldef\tempurl%
\url{https://doi.org/10.1145/3182657}
\showDOI{\tempurl}


\bibitem[Bouali(1998)]%
        {esterel-verification}
\bibfield{author}{\bibinfo{person}{Amar Bouali}.}
  \bibinfo{year}{1998}\natexlab{}.
\newblock \showarticletitle{Xeve, an Esterel verification environment}. In
  \bibinfo{booktitle}{\emph{Computer Aided Verification}},
  \bibfield{editor}{\bibinfo{person}{Alan~J. Hu} {and}
  \bibinfo{person}{Moshe~Y. Vardi}} (Eds.). \bibinfo{publisher}{Springer Berlin
  Heidelberg}, \bibinfo{address}{Berlin, Heidelberg},
  \bibinfo{pages}{500--504}.
\newblock
\showISBNx{978-3-540-69339-0}


\bibitem[Bourke et~al\mbox{.}(2017)]%
        {lustre-verified-compiler}
\bibfield{author}{\bibinfo{person}{Timothy Bourke}, \bibinfo{person}{L\'{e}lio
  Brun}, \bibinfo{person}{Pierre-\'{E}variste Dagand}, \bibinfo{person}{Xavier
  Leroy}, \bibinfo{person}{Marc Pouzet}, {and} \bibinfo{person}{Lionel Rieg}.}
  \bibinfo{year}{2017}\natexlab{}.
\newblock \showarticletitle{A Formally Verified Compiler for Lustre}. In
  \bibinfo{booktitle}{\emph{Proceedings of the 38th ACM SIGPLAN Conference on
  Programming Language Design and Implementation}} (Barcelona, Spain)
  \emph{(\bibinfo{series}{PLDI 2017})}. \bibinfo{publisher}{Association for
  Computing Machinery}, \bibinfo{address}{New York, NY, USA},
  \bibinfo{pages}{586–601}.
\newblock
\showISBNx{9781450349888}
\urldef\tempurl%
\url{https://doi.org/10.1145/3062341.3062358}
\showDOI{\tempurl}


\bibitem[Boyland(2013)]%
        {fractional-perm}
\bibfield{author}{\bibinfo{person}{John Boyland}.}
  \bibinfo{year}{2013}\natexlab{}.
\newblock \bibinfo{booktitle}{\emph{Fractional Permissions}}.
\newblock \bibinfo{publisher}{Springer Berlin Heidelberg},
  \bibinfo{address}{Berlin, Heidelberg}, \bibinfo{pages}{270--288}.
\newblock
\showISBNx{978-3-642-36946-9}
\urldef\tempurl%
\url{https://doi.org/10.1007/978-3-642-36946-9_10}
\showDOI{\tempurl}


\bibitem[Brookes(2006)]%
        {concurrent-separation-logic}
\bibfield{author}{\bibinfo{person}{Stephen Brookes}.}
  \bibinfo{year}{2006}\natexlab{}.
\newblock \showarticletitle{Variables as Resource for Shared-Memory Programs:
  Semantics and Soundness}.
\newblock \bibinfo{journal}{\emph{Electronic Notes in Theoretical Computer
  Science}}  \bibinfo{volume}{158} (\bibinfo{year}{2006}),
  \bibinfo{pages}{123--150}.
\newblock
\showISSN{1571-0661}
\urldef\tempurl%
\url{https://doi.org/10.1016/j.entcs.2006.04.008}
\showDOI{\tempurl}
\newblock
\shownote{Proceedings of the 22nd Annual Conference on Mathematical Foundations
  of Programming Semantics (MFPS XXII)}.


\bibitem[Clavel et~al\mbox{.}(2002)]%
        {maude}
\bibfield{author}{\bibinfo{person}{M. Clavel}, \bibinfo{person}{F. Duran},
  \bibinfo{person}{S. Eker}, \bibinfo{person}{P. Lincoln}, \bibinfo{person}{N.
  Marti-Oliet}, \bibinfo{person}{J. Meseguer}, {and} \bibinfo{person}{J.F.
  Quesada}.} \bibinfo{year}{2002}\natexlab{}.
\newblock \showarticletitle{Maude: Specification and programming in rewriting
  logic}.
\newblock \bibinfo{journal}{\emph{Theoretical Computer Science}}
  \bibinfo{volume}{285}, \bibinfo{number}{2} (\bibinfo{year}{2002}),
  \bibinfo{pages}{187--243}.
\newblock
\showISSN{0304-3975}
\urldef\tempurl%
\url{https://doi.org/10.1016/S0304-3975(01)00359-0}
\showDOI{\tempurl}
\newblock
\shownote{Rewriting Logic and its Applications}.


\bibitem[Conte(2024)]%
        {sha256-impl}
\bibfield{author}{\bibinfo{person}{Brad Conte}.}
  \bibinfo{year}{2024}\natexlab{}.
\newblock \bibinfo{title}{Basic implementations of standard cryptography
  algorithms}.
\newblock
  \bibinfo{howpublished}{\url{https://github.com/B-Con/crypto-algorithms}}.
\newblock


\bibitem[De~Moura and Bj\o{}rner(2008)]%
        {z3}
\bibfield{author}{\bibinfo{person}{Leonardo De~Moura} {and}
  \bibinfo{person}{Nikolaj Bj\o{}rner}.} \bibinfo{year}{2008}\natexlab{}.
\newblock \showarticletitle{Z3: An Efficient {SMT} Solver}. In
  \bibinfo{booktitle}{\emph{Proceedings of the Theory and Practice of Software,
  14th International Conference on Tools and Algorithms for the Construction
  and Analysis of Systems}} (Budapest, Hungary)
  \emph{(\bibinfo{series}{TACAS'08/ETAPS'08})}.
  \bibinfo{publisher}{Springer-Verlag}, \bibinfo{address}{Berlin, Heidelberg},
  \bibinfo{pages}{337–340}.
\newblock
\showISBNx{3540787992}


\bibitem[Dennis and Misunas(1974)]%
        {dennis-dataflow}
\bibfield{author}{\bibinfo{person}{Jack~B. Dennis} {and}
  \bibinfo{person}{David~P. Misunas}.} \bibinfo{year}{1974}\natexlab{}.
\newblock \showarticletitle{A Preliminary Architecture for a Basic Data-Flow
  Processor}. In \bibinfo{booktitle}{\emph{Proceedings of the 2nd Annual
  Symposium on Computer Architecture}} \emph{(\bibinfo{series}{ISCA '75})}.
  \bibinfo{publisher}{Association for Computing Machinery},
  \bibinfo{address}{New York, NY, USA}, \bibinfo{pages}{126–132}.
\newblock
\showISBNx{9781450373661}
\urldef\tempurl%
\url{https://doi.org/10.1145/642089.642111}
\showDOI{\tempurl}


\bibitem[Durán et~al\mbox{.}(2020)]%
        {duran-confluence}
\bibfield{author}{\bibinfo{person}{Francisco Durán}, \bibinfo{person}{Jose
  Meseguer}, {and} \bibinfo{person}{Camilo Rocha}.}
  \bibinfo{year}{2020}\natexlab{}.
\newblock \showarticletitle{Ground confluence of order-sorted conditional
  specifications modulo axioms}.
\newblock \bibinfo{journal}{\emph{Journal of Logical and Algebraic Methods in
  Programming}}  \bibinfo{volume}{111} (\bibinfo{year}{2020}),
  \bibinfo{pages}{100513}.
\newblock
\showISSN{2352-2208}
\urldef\tempurl%
\url{https://doi.org/10.1016/j.jlamp.2019.100513}
\showDOI{\tempurl}


\bibitem[Gilles(1974)]%
        {kahn-process-networks}
\bibfield{author}{\bibinfo{person}{Kahn Gilles}.}
  \bibinfo{year}{1974}\natexlab{}.
\newblock \showarticletitle{The semantics of a simple language for parallel
  programming}.
\newblock \bibinfo{journal}{\emph{Information processing}}
  \bibinfo{volume}{74}, \bibinfo{number}{471-475} (\bibinfo{year}{1974}),
  \bibinfo{pages}{15--28}.
\newblock


\bibitem[Gobieski et~al\mbox{.}(2021)]%
        {snafu}
\bibfield{author}{\bibinfo{person}{Graham Gobieski},
  \bibinfo{person}{Ahmet~Oguz Atli}, \bibinfo{person}{Kenneth Mai},
  \bibinfo{person}{Brandon Lucia}, {and} \bibinfo{person}{Nathan Beckmann}.}
  \bibinfo{year}{2021}\natexlab{}.
\newblock \showarticletitle{Snafu: an ultra-low-power, energy-minimal
  CGRA-generation framework and architecture}. In
  \bibinfo{booktitle}{\emph{Proceedings of the 48th Annual International
  Symposium on Computer Architecture}} (Virtual Event, Spain)
  \emph{(\bibinfo{series}{ISCA '21})}. \bibinfo{publisher}{IEEE Press},
  \bibinfo{address}{New York, NY, USA}, \bibinfo{pages}{1027–1040}.
\newblock
\showISBNx{9781450390866}
\urldef\tempurl%
\url{https://doi.org/10.1109/ISCA52012.2021.00084}
\showDOI{\tempurl}


\bibitem[Gobieski et~al\mbox{.}(2023)]%
        {riptide}
\bibfield{author}{\bibinfo{person}{Graham Gobieski}, \bibinfo{person}{Souradip
  Ghosh}, \bibinfo{person}{Marijn Heule}, \bibinfo{person}{Todd Mowry},
  \bibinfo{person}{Tony Nowatzki}, \bibinfo{person}{Nathan Beckmann}, {and}
  \bibinfo{person}{Brandon Lucia}.} \bibinfo{year}{2023}\natexlab{}.
\newblock \showarticletitle{RipTide: A Programmable, Energy-Minimal Dataflow
  Compiler and Architecture}. In \bibinfo{booktitle}{\emph{Proceedings of the
  55th Annual IEEE/ACM International Symposium on Microarchitecture}} (Chicago,
  Illinois, USA) \emph{(\bibinfo{series}{MICRO '22})}. \bibinfo{publisher}{IEEE
  Press}, \bibinfo{address}{New York, NY, USA}, \bibinfo{pages}{546–564}.
\newblock
\showISBNx{9781665462723}
\urldef\tempurl%
\url{https://doi.org/10.1109/MICRO56248.2022.00046}
\showDOI{\tempurl}


\bibitem[Goldstein et~al\mbox{.}(2000)]%
        {piperench}
\bibfield{author}{\bibinfo{person}{S.C. Goldstein}, \bibinfo{person}{H.
  Schmit}, \bibinfo{person}{M. Budiu}, \bibinfo{person}{S. Cadambi},
  \bibinfo{person}{M. Moe}, {and} \bibinfo{person}{R.R. Taylor}.}
  \bibinfo{year}{2000}\natexlab{}.
\newblock \showarticletitle{{PipeRench}: A reconfigurable architecture and
  compiler}.
\newblock \bibinfo{journal}{\emph{Computer}} \bibinfo{volume}{33},
  \bibinfo{number}{4} (\bibinfo{year}{2000}), \bibinfo{pages}{70--77}.
\newblock
\urldef\tempurl%
\url{https://doi.org/10.1109/2.839324}
\showDOI{\tempurl}


\bibitem[Govereau(2012)]%
        {denotational-tv}
\bibfield{author}{\bibinfo{person}{Paul Govereau}.}
  \bibinfo{year}{2012}\natexlab{}.
\newblock \emph{\bibinfo{title}{Denotational Translation Validation}}.
\newblock \bibinfo{thesistype}{Ph.\,D. Dissertation}. \bibinfo{school}{Harvard
  University}, \bibinfo{address}{USA}.
\newblock Advisor(s) Morrisett, John G.
\newblock
\showISBNx{9781267193438}
\newblock
\shownote{AAI3495610}.


\bibitem[Hagen and Tinelli(2008)]%
        {lustre-smt-verification}
\bibfield{author}{\bibinfo{person}{George Hagen} {and} \bibinfo{person}{Cesare
  Tinelli}.} \bibinfo{year}{2008}\natexlab{}.
\newblock \showarticletitle{Scaling Up the Formal Verification of Lustre
  Programs with SMT-Based Techniques}. In \bibinfo{booktitle}{\emph{2008 Formal
  Methods in Computer-Aided Design}}. \bibinfo{publisher}{IEEE Press},
  \bibinfo{address}{USA}, \bibinfo{pages}{1--9}.
\newblock
\urldef\tempurl%
\url{https://doi.org/10.1109/FMCAD.2008.ECP.19}
\showDOI{\tempurl}


\bibitem[Halbwachs et~al\mbox{.}(1991)]%
        {lustre}
\bibfield{author}{\bibinfo{person}{N. Halbwachs}, \bibinfo{person}{P. Caspi},
  \bibinfo{person}{P. Raymond}, {and} \bibinfo{person}{D. Pilaud}.}
  \bibinfo{year}{1991}\natexlab{}.
\newblock \showarticletitle{The synchronous data flow programming language
  LUSTRE}.
\newblock \bibinfo{journal}{\emph{Proc. IEEE}} \bibinfo{volume}{79},
  \bibinfo{number}{9} (\bibinfo{year}{1991}), \bibinfo{pages}{1305--1320}.
\newblock
\urldef\tempurl%
\url{https://doi.org/10.1109/5.97300}
\showDOI{\tempurl}


\bibitem[Johnston et~al\mbox{.}(2004)]%
        {dataflow-programming-survey}
\bibfield{author}{\bibinfo{person}{Wesley~M. Johnston},
  \bibinfo{person}{J.~R.~Paul Hanna}, {and} \bibinfo{person}{Richard~J.
  Millar}.} \bibinfo{year}{2004}\natexlab{}.
\newblock \showarticletitle{Advances in dataflow programming languages}.
\newblock \bibinfo{journal}{\emph{ACM Comput. Surv.}} \bibinfo{volume}{36},
  \bibinfo{number}{1} (\bibinfo{date}{mar} \bibinfo{year}{2004}),
  \bibinfo{pages}{1–34}.
\newblock
\showISSN{0360-0300}
\urldef\tempurl%
\url{https://doi.org/10.1145/1013208.1013209}
\showDOI{\tempurl}


\bibitem[Jung et~al\mbox{.}(2018)]%
        {iris}
\bibfield{author}{\bibinfo{person}{Ralf Jung}, \bibinfo{person}{Robbert
  Krebbers}, \bibinfo{person}{Jacques-Henri Jourdan}, \bibinfo{person}{Ales
  Bizjak}, \bibinfo{person}{Lars Birkedal}, {and} \bibinfo{person}{Derek
  Dreyer}.} \bibinfo{year}{2018}\natexlab{}.
\newblock \showarticletitle{Iris from the ground up: A modular foundation for
  higher-order concurrent separation logic}.
\newblock \bibinfo{journal}{\emph{Journal of Functional Programming}}
  \bibinfo{volume}{28} (\bibinfo{year}{2018}), \bibinfo{pages}{e20}.
\newblock
\urldef\tempurl%
\url{https://doi.org/10.1017/S0956796818000151}
\showDOI{\tempurl}


\bibitem[Karp and Miller(1966)]%
        {computation-graphs}
\bibfield{author}{\bibinfo{person}{Richard~M. Karp} {and}
  \bibinfo{person}{Rayamond~E. Miller}.} \bibinfo{year}{1966}\natexlab{}.
\newblock \showarticletitle{Properties of a Model for Parallel Computations:
  Determinacy, Termination, Queueing}.
\newblock \bibinfo{journal}{\emph{SIAM J. Appl. Math.}} \bibinfo{volume}{14},
  \bibinfo{number}{6} (\bibinfo{year}{1966}), \bibinfo{pages}{1390--1411}.
\newblock
\urldef\tempurl%
\url{https://doi.org/10.1137/0114108}
\showDOI{\tempurl}
\showeprint{https://doi.org/10.1137/0114108}


\bibitem[Kasampalis et~al\mbox{.}(2021)]%
        {keq}
\bibfield{author}{\bibinfo{person}{Theodoros Kasampalis},
  \bibinfo{person}{Daejun Park}, \bibinfo{person}{Zhengyao Lin},
  \bibinfo{person}{Vikram~S. Adve}, {and} \bibinfo{person}{Grigore Ro\c{s}u}.}
  \bibinfo{year}{2021}\natexlab{}.
\newblock \showarticletitle{Language-Parametric Compiler Validation with
  Application to {LLVM}}. In \bibinfo{booktitle}{\emph{Proceedings of the 26th
  ACM International Conference on Architectural Support for Programming
  Languages and Operating Systems}} (Virtual, USA)
  \emph{(\bibinfo{series}{ASPLOS '21})}. \bibinfo{publisher}{Association for
  Computing Machinery}, \bibinfo{address}{New York, NY, USA},
  \bibinfo{pages}{1004–1019}.
\newblock
\showISBNx{9781450383172}
\urldef\tempurl%
\url{https://doi.org/10.1145/3445814.3446751}
\showDOI{\tempurl}


\bibitem[Knuth and Bendix(1970)]%
        {knuth-critical-pairs}
\bibfield{author}{\bibinfo{person}{Donald~E. Knuth} {and}
  \bibinfo{person}{Peter~B. Bendix}.} \bibinfo{year}{1970}\natexlab{}.
\newblock \showarticletitle{Simple Word Problems in Universal Algebras}.
\newblock In \bibinfo{booktitle}{\emph{Computational Problems in Abstract
  Algebra}}. \bibinfo{publisher}{Pergamon}, \bibinfo{address}{Oxford, UK},
  \bibinfo{pages}{263--297}.
\newblock
\showISBNx{978-0-08-012975-4}
\urldef\tempurl%
\url{https://doi.org/10.1016/B978-0-08-012975-4.50028-X}
\showDOI{\tempurl}


\bibitem[Kragl et~al\mbox{.}(2020)]%
        {inductive-sequentialization}
\bibfield{author}{\bibinfo{person}{Bernhard Kragl}, \bibinfo{person}{Constantin
  Enea}, \bibinfo{person}{Thomas~A. Henzinger}, \bibinfo{person}{Suha~Orhun
  Mutluergil}, {and} \bibinfo{person}{Shaz Qadeer}.}
  \bibinfo{year}{2020}\natexlab{}.
\newblock \showarticletitle{Inductive sequentialization of asynchronous
  programs}. In \bibinfo{booktitle}{\emph{Proceedings of the 41st ACM SIGPLAN
  Conference on Programming Language Design and Implementation}} (London, UK)
  \emph{(\bibinfo{series}{PLDI 2020})}. \bibinfo{publisher}{Association for
  Computing Machinery}, \bibinfo{address}{New York, NY, USA},
  \bibinfo{pages}{227–242}.
\newblock
\showISBNx{9781450376136}
\urldef\tempurl%
\url{https://doi.org/10.1145/3385412.3385980}
\showDOI{\tempurl}


\bibitem[Kundu et~al\mbox{.}(2009)]%
        {kundu-tv}
\bibfield{author}{\bibinfo{person}{Sudipta Kundu}, \bibinfo{person}{Zachary
  Tatlock}, {and} \bibinfo{person}{Sorin Lerner}.}
  \bibinfo{year}{2009}\natexlab{}.
\newblock \showarticletitle{Proving Optimizations Correct Using Parameterized
  Program Equivalence}. In \bibinfo{booktitle}{\emph{Proceedings of the 30th
  ACM SIGPLAN Conference on Programming Language Design and Implementation}}
  (Dublin, Ireland) \emph{(\bibinfo{series}{PLDI '09})}.
  \bibinfo{publisher}{Association for Computing Machinery},
  \bibinfo{address}{New York, NY, USA}, \bibinfo{pages}{327–337}.
\newblock
\showISBNx{9781605583921}
\urldef\tempurl%
\url{https://doi.org/10.1145/1542476.1542513}
\showDOI{\tempurl}


\bibitem[Lamport(2001)]%
        {paxos}
\bibfield{author}{\bibinfo{person}{Leslie Lamport}.}
  \bibinfo{year}{2001}\natexlab{}.
\newblock \showarticletitle{Paxos Made Simple}.
\newblock \bibinfo{journal}{\emph{ACM SIGACT News (Distributed Computing
  Column)}} \bibinfo{volume}{4}, \bibinfo{number}{32} (\bibinfo{date}{December}
  \bibinfo{year}{2001}), \bibinfo{pages}{51--58}.
\newblock
\urldef\tempurl%
\url{https://www.microsoft.com/en-us/research/publication/paxos-made-simple/}
\showURL{%
\tempurl}


\bibitem[Lattner and Adve(2004)]%
        {llvm}
\bibfield{author}{\bibinfo{person}{Chris Lattner} {and} \bibinfo{person}{Vikram
  Adve}.} \bibinfo{year}{2004}\natexlab{}.
\newblock \showarticletitle{{LLVM}: A Compilation Framework for Lifelong
  Program Analysis \& Transformation}. In \bibinfo{booktitle}{\emph{Proceedings
  of the International Symposium on Code Generation and Optimization:
  Feedback-Directed and Runtime Optimization}} (Palo Alto, California)
  \emph{(\bibinfo{series}{CGO '04})}. \bibinfo{publisher}{IEEE Computer
  Society}, \bibinfo{address}{USA}, \bibinfo{pages}{75}.
\newblock
\showISBNx{0769521029}


\bibitem[Lattuada et~al\mbox{.}(2023)]%
        {verus}
\bibfield{author}{\bibinfo{person}{Andrea Lattuada}, \bibinfo{person}{Travis
  Hance}, \bibinfo{person}{Chanhee Cho}, \bibinfo{person}{Matthias Brun},
  \bibinfo{person}{Isitha Subasinghe}, \bibinfo{person}{Yi Zhou},
  \bibinfo{person}{Jon Howell}, \bibinfo{person}{Bryan Parno}, {and}
  \bibinfo{person}{Chris Hawblitzel}.} \bibinfo{year}{2023}\natexlab{}.
\newblock \showarticletitle{Verus: Verifying Rust Programs using Linear Ghost
  Types}.
\newblock \bibinfo{journal}{\emph{Proc. ACM Program. Lang.}}
  \bibinfo{volume}{7}, \bibinfo{number}{OOPSLA1}, Article
  \bibinfo{articleno}{85} (\bibinfo{date}{apr} \bibinfo{year}{2023}),
  \bibinfo{numpages}{30}~pages.
\newblock
\urldef\tempurl%
\url{https://doi.org/10.1145/3586037}
\showDOI{\tempurl}


\bibitem[Leroy(2009)]%
        {compcert}
\bibfield{author}{\bibinfo{person}{Xavier Leroy}.}
  \bibinfo{year}{2009}\natexlab{}.
\newblock \showarticletitle{Formal Verification of a Realistic Compiler}.
\newblock \bibinfo{journal}{\emph{Commun. ACM}} \bibinfo{volume}{52},
  \bibinfo{number}{7} (\bibinfo{date}{jul} \bibinfo{year}{2009}),
  \bibinfo{pages}{107–115}.
\newblock
\showISSN{0001-0782}
\urldef\tempurl%
\url{https://doi.org/10.1145/1538788.1538814}
\showDOI{\tempurl}


\bibitem[Lin et~al\mbox{.}(2024)]%
        {anonymous-repo}
\bibfield{author}{\bibinfo{person}{Zhengyao Lin}, \bibinfo{person}{Joshua
  Gancher}, {and} \bibinfo{person}{Bryan}.} \bibinfo{year}{2024}\natexlab{}.
\newblock \bibinfo{title}{{GitHub} Repository}.
\newblock
  \bibinfo{howpublished}{\url{https://github.com/secure-foundations/riptide-verification}}.
\newblock


\bibitem[Liu et~al\mbox{.}(2019)]%
        {cgra-survey}
\bibfield{author}{\bibinfo{person}{Leibo Liu}, \bibinfo{person}{Jianfeng Zhu},
  \bibinfo{person}{Zhaoshi Li}, \bibinfo{person}{Yanan Lu},
  \bibinfo{person}{Yangdong Deng}, \bibinfo{person}{Jie Han},
  \bibinfo{person}{Shouyi Yin}, {and} \bibinfo{person}{Shaojun Wei}.}
  \bibinfo{year}{2019}\natexlab{}.
\newblock \showarticletitle{A Survey of Coarse-Grained Reconfigurable
  Architecture and Design: Taxonomy, Challenges, and Applications}.
\newblock \bibinfo{journal}{\emph{ACM Comput. Surv.}} \bibinfo{volume}{52},
  \bibinfo{number}{6}, Article \bibinfo{articleno}{118} (\bibinfo{date}{Oct}
  \bibinfo{year}{2019}), \bibinfo{numpages}{39}~pages.
\newblock
\showISSN{0360-0300}
\urldef\tempurl%
\url{https://doi.org/10.1145/3357375}
\showDOI{\tempurl}


\bibitem[LLVM(2024a)]%
        {clang}
\bibfield{author}{\bibinfo{person}{LLVM}.} \bibinfo{year}{2024}\natexlab{a}.
\newblock \bibinfo{title}{Clang: a C language family frontend for {LLVM}}.
\newblock \bibinfo{howpublished}{\url{https://clang.llvm.org/}}.
\newblock


\bibitem[LLVM(2024b)]%
        {llvm-ir}
\bibfield{author}{\bibinfo{person}{LLVM}.} \bibinfo{year}{2024}\natexlab{b}.
\newblock \bibinfo{title}{{LLVM} Language Reference Manual}.
\newblock \bibinfo{howpublished}{\url{https://llvm.org/docs/LangRef.html}}.
\newblock


\bibitem[Lopes et~al\mbox{.}(2021)]%
        {alive2}
\bibfield{author}{\bibinfo{person}{Nuno~P. Lopes}, \bibinfo{person}{Juneyoung
  Lee}, \bibinfo{person}{Chung-Kil Hur}, \bibinfo{person}{Zhengyang Liu}, {and}
  \bibinfo{person}{John Regehr}.} \bibinfo{year}{2021}\natexlab{}.
\newblock \showarticletitle{Alive2: Bounded Translation Validation for {LLVM}}.
  In \bibinfo{booktitle}{\emph{Proceedings of the 42nd ACM SIGPLAN
  International Conference on Programming Language Design and Implementation}}
  (Virtual, Canada) \emph{(\bibinfo{series}{PLDI 2021})}.
  \bibinfo{publisher}{Association for Computing Machinery},
  \bibinfo{address}{New York, NY, USA}, \bibinfo{pages}{65–79}.
\newblock
\showISBNx{9781450383912}
\urldef\tempurl%
\url{https://doi.org/10.1145/3453483.3454030}
\showDOI{\tempurl}


\bibitem[Lopes et~al\mbox{.}(2015)]%
        {alive}
\bibfield{author}{\bibinfo{person}{Nuno~P. Lopes}, \bibinfo{person}{David
  Menendez}, \bibinfo{person}{Santosh Nagarakatte}, {and} \bibinfo{person}{John
  Regehr}.} \bibinfo{year}{2015}\natexlab{}.
\newblock \showarticletitle{Provably Correct Peephole Optimizations with
  {Alive}}. In \bibinfo{booktitle}{\emph{Proceedings of the 36th ACM SIGPLAN
  Conference on Programming Language Design and Implementation}} (Portland, OR,
  USA) \emph{(\bibinfo{series}{PLDI '15})}. \bibinfo{publisher}{Association for
  Computing Machinery}, \bibinfo{address}{New York, NY, USA},
  \bibinfo{pages}{22–32}.
\newblock
\showISBNx{9781450334686}
\urldef\tempurl%
\url{https://doi.org/10.1145/2737924.2737965}
\showDOI{\tempurl}


\bibitem[Milner(1999)]%
        {pi-calculus}
\bibfield{author}{\bibinfo{person}{Robin Milner}.}
  \bibinfo{year}{1999}\natexlab{}.
\newblock \bibinfo{booktitle}{\emph{Communicating and Mobile Systems: The
  {$\pi$}-calculus}}.
\newblock \bibinfo{publisher}{Cambridge University Press},
  \bibinfo{address}{United States}.
\newblock
\showISBNx{0-521-65869-1}


\bibitem[M{\"u}ller et~al\mbox{.}(2016)]%
        {viper}
\bibfield{author}{\bibinfo{person}{Peter M{\"u}ller}, \bibinfo{person}{Malte
  Schwerhoff}, {and} \bibinfo{person}{Alexander~J. Summers}.}
  \bibinfo{year}{2016}\natexlab{}.
\newblock \showarticletitle{Viper: A Verification Infrastructure for
  Permission-Based Reasoning}. In \bibinfo{booktitle}{\emph{Verification, Model
  Checking, and Abstract Interpretation}},
  \bibfield{editor}{\bibinfo{person}{Barbara Jobstmann} {and}
  \bibinfo{person}{K.~Rustan~M. Leino}} (Eds.). \bibinfo{publisher}{Springer
  Berlin Heidelberg}, \bibinfo{address}{Berlin, Heidelberg},
  \bibinfo{pages}{41--62}.
\newblock
\showISBNx{978-3-662-49122-5}


\bibitem[Necula(2000)]%
        {necula-tv}
\bibfield{author}{\bibinfo{person}{George~C. Necula}.}
  \bibinfo{year}{2000}\natexlab{}.
\newblock \showarticletitle{Translation Validation for an Optimizing Compiler}.
  In \bibinfo{booktitle}{\emph{Proceedings of the ACM SIGPLAN 2000 Conference
  on Programming Language Design and Implementation}} (Vancouver, British
  Columbia, Canada) \emph{(\bibinfo{series}{PLDI '00})}.
  \bibinfo{publisher}{Association for Computing Machinery},
  \bibinfo{address}{New York, NY, USA}, \bibinfo{pages}{83–94}.
\newblock
\showISBNx{1581131992}
\urldef\tempurl%
\url{https://doi.org/10.1145/349299.349314}
\showDOI{\tempurl}


\bibitem[Pnueli et~al\mbox{.}(1998)]%
        {pnueli-tv}
\bibfield{author}{\bibinfo{person}{A. Pnueli}, \bibinfo{person}{M. Siegel},
  {and} \bibinfo{person}{E. Singerman}.} \bibinfo{year}{1998}\natexlab{}.
\newblock \showarticletitle{Translation validation}. In
  \bibinfo{booktitle}{\emph{Tools and Algorithms for the Construction and
  Analysis of Systems}}, \bibfield{editor}{\bibinfo{person}{Bernhard Steffen}}
  (Ed.). \bibinfo{publisher}{Springer Berlin Heidelberg},
  \bibinfo{address}{Berlin, Heidelberg}, \bibinfo{pages}{151--166}.
\newblock
\showISBNx{978-3-540-69753-4}


\bibitem[Roscoe(1997)]%
        {csp}
\bibfield{author}{\bibinfo{person}{A.~W. Roscoe}.}
  \bibinfo{year}{1997}\natexlab{}.
\newblock \bibinfo{booktitle}{\emph{The Theory and Practice of Concurrency}}.
\newblock \bibinfo{publisher}{Prentice Hall PTR}, \bibinfo{address}{USA}.
\newblock
\showISBNx{0136744095}


\bibitem[Sangiorgi and Walker(2001)]%
        {pi-calculus-typed}
\bibfield{author}{\bibinfo{person}{Davide Sangiorgi} {and}
  \bibinfo{person}{David Walker}.} \bibinfo{year}{2001}\natexlab{}.
\newblock \bibinfo{booktitle}{\emph{PI-Calculus: A Theory of Mobile
  Processes}}.
\newblock \bibinfo{publisher}{Cambridge University Press},
  \bibinfo{address}{USA}.
\newblock
\showISBNx{0521781779}


\bibitem[Serafin et~al\mbox{.}(2023)]%
        {pipestitch}
\bibfield{author}{\bibinfo{person}{Nathan Serafin}, \bibinfo{person}{Souradip
  Ghosh}, \bibinfo{person}{Harsh Desai}, \bibinfo{person}{Nathan Beckmann},
  {and} \bibinfo{person}{Brandon Lucia}.} \bibinfo{year}{2023}\natexlab{}.
\newblock \showarticletitle{Pipestitch: An energy-minimal dataflow architecture
  with lightweight threads}. In \bibinfo{booktitle}{\emph{Proceedings of the
  56th Annual IEEE/ACM International Symposium on Microarchitecture}} (Toronto,
  ON, Canada) \emph{(\bibinfo{series}{MICRO '23})}.
  \bibinfo{publisher}{Association for Computing Machinery},
  \bibinfo{address}{New York, NY, USA}, \bibinfo{pages}{1409–1422}.
\newblock
\showISBNx{9798400703294}
\urldef\tempurl%
\url{https://doi.org/10.1145/3613424.3614283}
\showDOI{\tempurl}


\bibitem[Sharma et~al\mbox{.}(2013)]%
        {data-driven-eq}
\bibfield{author}{\bibinfo{person}{Rahul Sharma}, \bibinfo{person}{Eric
  Schkufza}, \bibinfo{person}{Berkeley Churchill}, {and} \bibinfo{person}{Alex
  Aiken}.} \bibinfo{year}{2013}\natexlab{}.
\newblock \showarticletitle{Data-Driven Equivalence Checking}. In
  \bibinfo{booktitle}{\emph{Proceedings of the 2013 ACM SIGPLAN International
  Conference on Object Oriented Programming Systems Languages \& Applications}}
  (Indianapolis, Indiana, USA) \emph{(\bibinfo{series}{OOPSLA '13})}.
  \bibinfo{publisher}{Association for Computing Machinery},
  \bibinfo{address}{New York, NY, USA}, \bibinfo{pages}{391–406}.
\newblock
\showISBNx{9781450323741}
\urldef\tempurl%
\url{https://doi.org/10.1145/2509136.2509509}
\showDOI{\tempurl}


\bibitem[Swanson et~al\mbox{.}(2003)]%
        {wavescalar}
\bibfield{author}{\bibinfo{person}{Steven Swanson}, \bibinfo{person}{Ken
  Michelson}, \bibinfo{person}{Andrew Schwerin}, {and} \bibinfo{person}{Mark
  Oskin}.} \bibinfo{year}{2003}\natexlab{}.
\newblock \showarticletitle{WaveScalar}. In
  \bibinfo{booktitle}{\emph{Proceedings of the 36th Annual IEEE/ACM
  International Symposium on Microarchitecture}} \emph{(\bibinfo{series}{MICRO
  36})}. \bibinfo{publisher}{IEEE Computer Society}, \bibinfo{address}{USA},
  \bibinfo{pages}{291}.
\newblock
\showISBNx{076952043X}


\bibitem[Tate et~al\mbox{.}(2009)]%
        {tate-eq-saturation}
\bibfield{author}{\bibinfo{person}{Ross Tate}, \bibinfo{person}{Michael Stepp},
  \bibinfo{person}{Zachary Tatlock}, {and} \bibinfo{person}{Sorin Lerner}.}
  \bibinfo{year}{2009}\natexlab{}.
\newblock \showarticletitle{Equality Saturation: A New Approach to
  Optimization}. In \bibinfo{booktitle}{\emph{Proceedings of the 36th Annual
  ACM SIGPLAN-SIGACT Symposium on Principles of Programming Languages}}
  (Savannah, GA, USA) \emph{(\bibinfo{series}{POPL '09})}.
  \bibinfo{publisher}{Association for Computing Machinery},
  \bibinfo{address}{New York, NY, USA}, \bibinfo{pages}{264–276}.
\newblock
\showISBNx{9781605583792}
\urldef\tempurl%
\url{https://doi.org/10.1145/1480881.1480915}
\showDOI{\tempurl}


\bibitem[Tristan et~al\mbox{.}(2011)]%
        {llvm-md}
\bibfield{author}{\bibinfo{person}{Jean-Baptiste Tristan},
  \bibinfo{person}{Paul Govereau}, {and} \bibinfo{person}{Greg Morrisett}.}
  \bibinfo{year}{2011}\natexlab{}.
\newblock \showarticletitle{Evaluating Value-Graph Translation Validation for
  {LLVM}}. In \bibinfo{booktitle}{\emph{Proceedings of the 32nd ACM SIGPLAN
  Conference on Programming Language Design and Implementation}} (San Jose,
  California, USA) \emph{(\bibinfo{series}{PLDI '11})}.
  \bibinfo{publisher}{Association for Computing Machinery},
  \bibinfo{address}{New York, NY, USA}, \bibinfo{pages}{295–305}.
\newblock
\showISBNx{9781450306638}
\urldef\tempurl%
\url{https://doi.org/10.1145/1993498.1993533}
\showDOI{\tempurl}


\bibitem[von Gleissenthall et~al\mbox{.}(2019)]%
        {pretend-synchrony}
\bibfield{author}{\bibinfo{person}{Klaus von Gleissenthall},
  \bibinfo{person}{Rami~G{\"o}khan Kici}, \bibinfo{person}{Alexander Bakst},
  \bibinfo{person}{Deian Stefan}, {and} \bibinfo{person}{Ranjit Jhala}.}
  \bibinfo{year}{2019}\natexlab{}.
\newblock \showarticletitle{Pretend synchrony: synchronous verification of
  asynchronous distributed programs}.
\newblock \bibinfo{journal}{\emph{Proceedings of the ACM on Programming
  Languages}}  \bibinfo{volume}{3} (\bibinfo{year}{2019}), \bibinfo{pages}{1 --
  30}.
\newblock
\urldef\tempurl%
\url{https://api.semanticscholar.org/CorpusID:57757310}
\showURL{%
\tempurl}


\bibitem[Yazdanpanah et~al\mbox{.}(2014)]%
        {yazdanpanah-dataflow}
\bibfield{author}{\bibinfo{person}{Fahimeh Yazdanpanah},
  \bibinfo{person}{Carlos Alvarez-Martinez}, \bibinfo{person}{Daniel
  Jimenez-Gonzalez}, {and} \bibinfo{person}{Yoav Etsion}.}
  \bibinfo{year}{2014}\natexlab{}.
\newblock \showarticletitle{Hybrid Dataflow/von-Neumann Architectures}.
\newblock \bibinfo{journal}{\emph{IEEE Transactions on Parallel and Distributed
  Systems}} \bibinfo{volume}{25}, \bibinfo{number}{6} (\bibinfo{year}{2014}),
  \bibinfo{pages}{1489--1509}.
\newblock
\urldef\tempurl%
\url{https://doi.org/10.1109/TPDS.2013.125}
\showDOI{\tempurl}


\end{thebibliography}


\end{document}